\def\draft{0}
\def\llncs{0}
\def\anon{0}

\ifnum\llncs=0
\documentclass[11pt]{article}
\usepackage[in]{fullpage}
\usepackage[utf8]{inputenc}
\else
\documentclass[orivec,runningheads,envcountsect,envcountsame]{llncs}
\fi

\newcommand{\boldpar}[1]{\vspace{3pt}\par\noindent\textbf{#1}}

\ifnum\llncs=1
\renewcommand{\paragraph}{\boldpar}
\fi

\usepackage[shortlabels]{enumitem}
\usepackage{bm}
\usepackage{amsfonts}
\usepackage{amsmath}
\usepackage{amssymb}
\usepackage{xcolor}
\usepackage{mathtools}
\usepackage{graphicx}
\usepackage{float}
\usepackage{hyperref}
\usepackage[nameinlink]{cleveref}
\usepackage{braket}
\usepackage{mathrsfs}
\usepackage{tikz}
\usepackage{qcircuit}
\usepackage{dsfont}
\usepackage{authblk}

\usepackage[style=alphabetic,minalphanames=3,maxalphanames=4,maxnames=99,backref=true]{biblatex}

\DeclareFieldFormat{eprint:iacr}{Cryptology ePrint Archive: \href{https://ia.cr/#1}{\texttt{#1}}}
\DeclareFieldFormat{eprint:iacrarchive}{Cryptology ePrint Archive: \href{https://eprint.iacr.org/archive/#1}{\texttt{#1}}}
\hypersetup{colorlinks={true},linkcolor={blue},citecolor=magenta}

\ifnum\llncs=0

\usepackage{amsthm}
\usepackage{amssymb}

\theoremstyle{plain}
\newtheorem{theorem}{Theorem}[section]
\newtheorem{lemma}[theorem]{Lemma}

\newtheorem{corollary}[theorem]{Corollary}

\newtheorem{proposition}[theorem]{Proposition}
\newtheorem{question}{Question}
\newtheorem{definition}[theorem]{Definition}
\newtheorem{remark}[theorem]{Remark}

\fi

\newcommand{\ketbra}[2]{\ket{#1}\!\bra{#2}}

\renewcommand{\cal}[1]{\mathcal{#1}}

\newcommand{\C}{\mathbb{C}}
\newcommand{\iu}{\mathrm{i}\mkern1mu}
\newcommand{\N}{\mathbb{N}}

\newcommand{\E}{\mathop{\mathbb{E}}}
\newcommand{\Tr}{\mathrm{Tr}}

\newcommand{\id}{\mathds{1}}
\newcommand{\eps}{\epsilon}
\newcommand{\binset}{\{0,1\}}

\newcommand{\cA}{\cal{A}}
\newcommand{\cB}{\cal{B}}
\newcommand{\cD}{\cal{D}}

\newcommand{\rnd}[1]{\lceil {#1} \rfloor}
\newcommand{\abs}[1]{\left| {#1} \right|}
\newcommand{\norm}[1]{\left\| {#1} \right\|}

\newcommand{\dnorm}[1]{\left\| {#1} \right\|_{\diamond}}

\newcommand{\poly}{\mathrm{poly}}
\newcommand{\polylog}{\mathrm{polylog}\, }

\newcommand{\loglog}{\mathrm{loglog}\, }

\title{On Scalable Pseudorandom Unitaries and the Unitary Synthesis Problem}

\ifnum\llncs=0

\author[1]{Zvika Brakerski}
\author[2]{Henry Yuen}

\affil[1]{Weizmann Institute of Science}
\affil[2]{Columbia University}

\else

\ifnum\anon=0
\author{Zvika Brakerski\inst{1} \and Henry Yuen\inst{2}}
\institute{Weizmann Institute of Science \and Columbia University}
\else
\author{}
\institute{}
\fi

\fi

\newcommand{\net}{\mathscr{N}}
\newcommand{\unitary}{\mathrm{U}}
\newcommand{\cvg}{\mathscr{C}}
\newcommand{\xps}{\eta}

\newcommand{\vol}{\mathsf{vol}}
\newcommand{\udesign}{\nu}
\newcommand{\hr}{\mathsf{Haar}}
\newcommand{\support}{\mathsf{support}}
\newcommand{\secp}{\kappa}
\newcommand{\qb}{\tau}

\begin{document}

\date{}

\maketitle

\begin{abstract}
We consider the task of constructing pseudorandom unitaries (PRUs) with \emph{scalable security}, i.e.\ families in which the security parameter may vary independently of the dimension (or input bit-length). It is not known whether scalable PRUs can be constructed. In this work we show that if scalable PRUs can be constructed via the prevailing paradigm for analyzing PRUs, then there would be a positive solution to the Aaronson-Kuperberg unitary synthesis problem, a longstanding question in quantum complexity theory about whether implementing arbitrary unitaries can be efficiently reduced to computing a Boolean function.

Specifically, we formalize the notion of \emph{ROM-PRUs}, which are statistically secure PRUs in the random oracle model (ROM). All prior known constructions of cryptographically secure PRUs are based on a ROM-PRU construction. We prove novel connections between ROM-PRUs, approximate unitary designs, $\eps$-nets over the unitary group, and the unitary synthesis problem. In particular, we prove that any unitary synthesis algorithm (and thus any ROM-PRU) must use a classical oracle with input length $(2 - o(1)) \log d$ bits, where $d$ is the dimension of the unitary to be implemented. This bound rules out \emph{all existing candidates} for scalable PRUs in the literature.

Together, these connections indicate that ROM-PRUs provide a fruitful idealized model for studying pseudorandom unitaries.

\end{abstract}

\ifnum\draft=1
\setcounter{tocdepth}{4}
\tableofcontents
\fi

\section{Introduction}
\label{sec:intro}

A pseudorandom unitary (PRU) ensemble~\cite{JLS18} is a family of efficiently implementable unitaries $\{U_s\}_s$ such that, for a random $s$, oracle access to $U_s$ is indistinguishable from oracle access to a Haar-random unitary. Whereas \cite{JLS18} defined this object and proposed a few candidates, only recently was it shown by \cite{MPSY24,MH25} that PRUs can be constructed. These constructions, as well as \emph{essentially all} known constructions of quantum pseudorandom objects from cryptographic assumptions, follow the ``ROM paradigm''. Namely, the PRU is first constructed and shown to be information-theoretically secure in the random oracle model (ROM), and then the oracle is replaced by a cryptographic pseudorandom function (PRF) to obtain a construction in the standard model under the assumption that post-quantum one-way functions exist.
In particular, \cite{MPSY24,MH25} constructed such \emph{ROM-PRUs} and proved that an adversary making $t$ queries to the PRU has statistical advantage $\lesssim t^2/d$ in distinguishing the construction from a Haar-random unitary, where $d$ is the dimension of the Hilbert space of the unitary (i.e.\ the unitary acts on $n = \log d$ qubits). This means that the proof provides no guarantee whenever the number of queries roughly exceeds $\sqrt{d}$. When the number of qubits $n$ is large, e.g.\ larger than the security parameter, the value $t^2/d$ is negligible.

What happens when $n$ is small, or otherwise one wishes to reduce the error further? This property is known as \emph{scalability} \cite{BS20}. It was established in \cite{MPSY24} that sequential repetition decreases the distinguishing gap, so it is possible to reduce the statistical advantage to $\lesssim (t^2/d)^k$ for any polynomial $k$. However, even in this case, once $t \gtrsim \sqrt{d}$, there is no security guarantee whatsoever.

We therefore define the \emph{query bound} $\qb$ of a ROM-PRU construction as the value of $t$ for which the statistical advantage equals $1/2$. We note that the choice of constant here is arbitrary, in particular in light of the aforementioned sequential amplification technique. This is because a construction with advantage $< 1/2^k$ can be derived from one with advantage $< 1/2$, for any polynomial $k$. A scalable ROM-PRU should therefore be able to support arbitrarily high $\qb = \poly(d, 2^\secp)$, where $\secp$ is the security parameter.

We posit the importance of studying the query bound of ROM-PRU constructions, and specifically the maximal achievable $\qb$. In particular, we show that it has implications for the possibility of a \emph{unitary synthesis} algorithm. Unitary synthesis addresses the question of whether the hardness of implementing arbitrary quantum unitaries can be reduced to the hardness of implementing arbitrary classical functions. Specifically, it asks whether there exists an efficient quantum reduction $R$ such that, for any $n$-qubit unitary $U$, there exists a classical function $f=f_U$ such that $R^f$ approximates $U$ (say, to within some distance $\epsilon$ in operator norm). This question is attributed to Aaronson and Kuperberg in 2007 and is still open. A recent work \cite{LMW24} gives evidence that the answer might be negative by showing that there is no such $R$ that makes only a single query to $f$.

\subsection{Our Results}

Our first contribution, in Section~\ref{sec:designstonets}, is a connection between ROM-PRUs and $\epsilon$-nets. An $\epsilon$-net over $\unitary(d)$ (the $d$-dimensional unitary group) is a set of unitaries $\net$ such that for all $U \in \unitary(d)$ there exists $V \in \net$ with $\norm{U-V} \le \epsilon$, where $\| \cdot \|$ denotes the operator norm. We observe that this can be done by thinking of ROM-PRUs as \emph{unitary (approximate) $t$-designs}. A $t$-design is a distribution over unitaries that is functionally identical to Haar-random unitaries for algorithms that make at most $t$ queries. There are various ways to define approximate designs, but for the sake of this exposition we consider the so-called diamond approximation. A $\delta$-approximate $t$-design is one where any adversary making at most $t$ queries has at most $\delta$ statistical advantage in distinguishing the design from random. We call this diamond approximation since it corresponds to the diamond distance between the respective channels. A ROM-PRU with query bound $\qb$ induces (together with the distribution over the random oracle) an approximate $\qb$-design, and therefore we seek a connection between such approximate designs and $\epsilon$-nets.

The question of quantitatively connecting unitary designs to $\epsilon$-nets has only been studied recently~\cite{oszmaniec2021epsilon,S_owik_2025}, but to the best of our knowledge not in the context of diamond approximation. This notion, as we show, provides a more operational handle on the problem of finding a connection with $\epsilon$-nets and in particular is more naturally manipulated using a cryptographic approach, as explained in the technical overview below.

Specifically, we show that a diamond approximate $t$ design, even for very mild $\delta$, e.g.\ $\delta = 0.1$, implies a notion that we call a \emph{relaxed} $\epsilon$-net, so long as $t \gtrsim d^2/\epsilon^2$ (or even $d^2/\epsilon$, if $\delta=0$). A relaxed $\epsilon$-net can be thought of as an average-case property, or as a net that does not cover the entire group $\unitary(d)$, but rather only guarantees that with high probability over a Haar random $U \in \unitary(d)$, there exists a $V \in \net$ s.t.\ $\norm{U-V} \le \epsilon$. It is then simple to show that if we compose a relaxed net with itself (i.e.\ take all matrices of the form $V_1 V_2$, where $V_1, V_2 \in \net$), we get a proper $\epsilon$-net. We provide a detailed technical comparison of our results to the prior works \cite{oszmaniec2021epsilon,S_owik_2025} in Section~\ref{sec:designscompare}.

Translating these results back to ROM-PRUs (Section~\ref{sec:rompru}), we get that if $\qb \gtrsim d^2/\epsilon^2$ (or $d^2/\epsilon$ under a stronger definition than diamond approximation), composing the ROM-PRU with itself implies an $\epsilon$-net. We then observe that an $\epsilon$-net in an oracle model trivially implies unitary synthesis. Specifically, the reduction $R$ is an application of the ROM-PRU (composed with itself), and we have a guarantee that for all $U$ there exists an instantiation of the random oracle to a function $f$ such that $R^f$ is close to $U$. This concludes our first result.

We recall that existing constructions of PRU \cite{MPSY24,MH25,LQSYZ25} are all proven with $\qb \gtrsim \sqrt{d}$ (we note that for ``strong'' PRU where the adversary has access to the inverse, transpose and conjugate of the unitary, existing proofs only show $\qb \gtrsim d^{1/32}$ \cite{MH25,SMLBH25}). This means that ``only'' a polynomial improvement stands in the way of affirmatively resolving the unitary synthesis problem, contrary to the evidence in \cite{LMW24}. Indeed, it may be the case that such a polynomial improvement is impossible, which would imply that some unknown consideration prevents us from achieving stronger unitary pseudorandomness. We believe that exploring this gap would shed light on the nature of this important class of primitives.

Given that the gap between what is known and what is sought is only polynomial, it makes sense to check whether the $\sqrt{d}$ bound in existing constructions is tight, or whether existing constructions come even closer to the $d^2$ tipping point. We consider the now-famous PFC construction of PRUs, introduced in \cite{MPSY24} and proven adaptively secure in \cite{MH25}. We show in Section~\ref{sec:PFC} that for this construction, it is possible to violate security using roughly $\sqrt{d}$ queries; furthermore, it suffices to query the unitary on the same input $\ket{0}$ in all queries. This result sharpens the motivation to find \emph{any} construction that breaks the $\sqrt{d}$ barrier. We note in this context that there are many ways to tweak the construction so that our techniques no longer work as is, even just sequential repetition. It remains an open problem whether there is a $\sqrt{d}$ upper bound for more general classes of constructions.

Lastly, in Section~\ref{sec:entropylb}, we analyze existing and prospective constructions of ROM-PRUs from an entropic standpoint. In particular, we show that relatively simple bounds impose strong constraints on what is achievable using standard techniques. We start by recalling some bounds from the literature on the cardinality of $t$-designs. We plug them into a simple counting argument for a ROM-PRU that uses an oracle $f: \binset^m \to \binset$. The support size of such a construction is therefore at most $2^{2^m}$. It follows that such a construction can only implement an approximate $t$-design if $m \gtrsim \log t + \loglog(d^2/t)$. Since in all known constructions it is only known how to use $m \le \log d$, we get a bound $t \lesssim d$. Indeed, such constructions cannot be scalable or even approach the $t=d^2$ tipping point. Note the intriguing gap between this bound and the actual $t = \sqrt{d}$ result that can be proven.

We notice that the aforementioned bound only holds for $t \ll d^2$ (for larger values $\loglog (d^2/t)$ is undefined). Indeed, we are not aware of any lower bound in the literature that applies for $t \gg d^2$. We therefore proceed to develop such a lower bound. We provide two proofs of similar statements: one from first principles using representation theory, and the other almost immediately from our previous results, using the connection to relaxed $\epsilon$-nets that we described above. We find it quite satisfactory that this bound essentially matches the best known bound, using a very different approach.

We proceed to show that approaching $t=d^2$ requires techniques different from those in the literature. Specifically, we show that relaxed $\epsilon$-nets in an oracle model require the oracle's \emph{input} to have length at least $2 \log d - o(\log d)$. That is, in order to achieve unitary synthesis for $n$ qubits, it is essential to use classical functions with input length roughly $2n$. Such a claim follows immediately from a counting argument if we consider only classical functions with binary output. However, our results apply even in a generalized model that allows classical functions with \emph{infinite} output length. Indeed, quantum access to a binary classical function is equivalent to access to a \emph{diagonal} unitary with $\pm 1$ values (and whose dimension is $2^n \times 2^n$). Likewise, if the range of the function is $[q]$, then a $q$-ary root of unity replaces $(-1)$. We may therefore allow arbitrary diagonal unitary gates, which would allow infinite precision in the description of the roots of unity along the diagonal, and potentially provide the necessary degrees of freedom required to construct ROM-PRUs and unitary synthesis. We show that this hope cannot be realized: even in the diagonal matrix model, it is impossible to achieve relaxed nets, and therefore also unitary synthesis and ROM-PRUs with $\qb \gg d^2$.

This result in particular applies to all existing candidates for PRUs. No known polynomial-size construction manages to utilize oracles of large input length. The state-of-the-art unitary synthesis algorithm due to Rosenthal~\cite{R21} requires the circuit $R$ to have size $\sqrt{d}$, and it indeed uses an oracle with input length $2\log d$. Viewed from our perspective, the $\sqrt{d}$ queries in that work are required to uncompute half of the $2 \log d$-qubit register provided to the oracle, in order to recover an output of length $\log d$. It is quite curious that there are no known techniques for nontrivially utilizing classical functions of input length greater than $\log(d)$, and we highlight this as a challenge for future ROM-PRU constructions.

\subsection{Technical Overview}

We now provide an overview of the techniques used in the paper.

\paragraph{Connections between ROM-PRU, Unitary Designs and Nets.}
In Section~\ref{sec:designstonets} we show connections between diamond designs and $\epsilon$-nets. As explained above, working with diamond designs allows us to take a cryptographic approach to relating them to $\epsilon$-nets. Indeed, a $\delta$-approximate $t$-design in the diamond norm is simply a distribution that is $\delta$-statistically indistinguishable from Haar random using $t$ nonadaptive queries. Our intuition, therefore, is to show that if our design is not an $\epsilon$-net, then it can be efficiently distinguished from the uniform distribution. A straightforward test that a set $\net$ is not an $\epsilon$-net is to sample a random unitary $U$ and calculate its distance to all elements in $\net$; if the unitary is $\epsilon$-far from all elements in $\net$, then $\net$ is not an $\epsilon$-net. Likewise, if the probability of being far from all elements in $\net$ is high, then we do not even have a relaxed $\epsilon$-net, as we explained above. Since we are only given \emph{query access} to $U$, our strategy is to first \emph{learn} $U$ and obtain a unitary $\hat{U}$ that has some guarantee of resembling $U$, and then compare $\hat{U}$ against all elements of $\net$, which does not require any additional queries.

We use the learning algorithm of \cite{haah2023query}, which requires $\approx d^2 \log (1/\eta)/\epsilon$ adaptive queries or $d^2 \log (1/\eta)/\epsilon^2$ non-adaptive queries to learn $U$ to within distance $\epsilon$ with probability at least $1-\eta$. We consider a distinguisher that makes the necessary number of queries to obtain $\hat{U}$, compares it against all elements in $\net$, and returns $1$ if and only if there is no element in $\net$ that is $\epsilon$-close to $\hat{U}$. When drawing $U$ from the approximate $t$-design, the probability of output $1$ is at most $\eta$, essentially by definition. This means, using the indistinguishability property, that the probability of returning $1$ on a \emph{random} $U$ is at most $\eta + \delta$. It follows that the support $\net$ induced by our approximate design is also a relaxed $\epsilon$-net, with a fraction of at most $\eta + \delta$ of the unitary group not being covered.

To go from a relaxed net to an $\epsilon$-net we use composition. Indeed, this can be seen as a version of the worst-case to average-case reduction of the unitary synthesis question. Let $\net$ be a relaxed net that covers most of the unitary group, i.e.\ with probability $> 0.5$, for a random $U$, there exists a $V$ that is $\epsilon$-close to $U$. Now consider an arbitrary unitary $T$ and the two dependent random variables $U^\dagger$ and $UT$, where $U$ is Haar random. Since both marginals are Haar random, by the union bound there is a positive probability that both are within $\epsilon$ distance from $\net$. Therefore, there exist $V_1, V_2 \in \net$ such that $\norm{V_1 - U^\dagger} \le \epsilon$ and $\norm{V_2 - UT} \le \epsilon$. It follows from submultiplicativity that $\norm{V_1 V_2 - T} \le 2\epsilon$. Therefore, $\net \cdot \net$ is a full $2\epsilon$-net.

Our results establish a direct relation between diamond-designs and $\epsilon$-nets. This is different in terms of techniques and incomparable in terms of the final result from prior work that relied in the notion of ``tensor-product extractors''. We provide a comparison and discussion in Section~\ref{sec:designscompare}.

Section~\ref{sec:rompru} focuses on the notion of ROM-PRU, and in particular its scalable version. We provide a definition of scalable ROM-PRU and use the aforementioned results to show that a ROM-PRU with a sufficiently large $\qb$ would affirmatively resolve the unitary synthesis problem.

\paragraph{An Upper Bound for Existing Constructions.} In Section~\ref{sec:PFC}, we show that using roughly $\sqrt{d}$ queries to the PFC construction, we can distinguish it from uniform. Furthermore, this is done by simply querying the PFC construction on the fixed input $0$, and measuring the resulting state in the computational basis. The property used for distinguishing is the number of collisions in this experiment. That is, if we make $t$ queries, how many of the $\binom{t}{2}$ pairs of strings have the same value?

To understand why this test could be useful, let us zoom into the specifics of the PFC construction. As the name suggests, this construction contains $3$ components. First, the input runs through a random Clifford $C$. Then it runs through a random phase gate $\ket{x} \to \omega^{f(x)} \ket{x}$, where $\omega$ is a root of unity (either $(-1)$ as in \cite{MPSY24}, or a higher order root), and $f$ is a random classical function with the proper domain and range. Finally, it runs through a random permutation $P$. The crucial observation is that if we are just looking at collisions on computational basis elements, then $F$ and $P$ are irrelevant, since they do not affect whether a collision occurs or with what probability. Therefore, for the sake of the collision-counting problem, the PFC construction is equivalent to a random Clifford, which at least intuitively should be quite different from a Haar-random unitary.

For a Haar random unitary, the state $U \ket{0}$ is a Haar random state, and its statistical properties are very well known. In particular, the probability of collision between two measurements of copies of the same string is roughly $2/d$, and this value is extremely well concentrated. Namely, with probability close to $1$ over $U$, we expect $\frac{2}{d} \binom{t}{2}$ collisions. For $t \approx \sqrt{d}$, this becomes a noticeable quantity.

For a Clifford, however, the state $C \ket{0}$ is a random \emph{stabilizer} state. Measuring a stabilizer state in the standard basis induces a uniform distribution over an affine $\mathbb{F}_2$ subspace of dimension $k$, where $k$ depends on the specific Clifford $C$, i.e.\ a collision probability of $2^{-k}$. Indeed, we show that the event $k = \log(d)$ has constant probability. Therefore, with constant probability there is a constant gap between the number of collisions expected in the two experiments.

In order to have a statistically significant test, we use the Median of Means estimator and show (by analyzing the variance of the collision number) that by performing the test correctly we have a constant distinguishing gap between Haar and PFC.

\paragraph{Entropy-Based Bounds for Designs and Unitary Synthesis.} We show our lower bounds in Section~\ref{sec:entropylb}. As explained above, we derive a new lower bound for the cardinality of the support of an approximate $t$-design, for $t \gg d^2$ in two ways. The first relies on representation theory and we will not survey it here. The second is quite straightforward given our work. If $t \gg d^2$, then a unitary $t$-design is also a relaxed $\epsilon$-net, i.e.\ it $\epsilon$-covers some $1-\eta$ fraction of the volume of the unitary group. Showing a lower bound on the cardinality of a relaxed net is trivial using volume considerations: since the volume of an $\epsilon$-ball is roughly $\epsilon^{d^2}$, the cardinality of a relaxed design is at least $(1-\eta)\left(\frac{1}{\epsilon}\right)^{d^2}$. Our transformation requires $t \approx d^2/\epsilon^2$ for diamond designs, which implies a bound of roughly $(1-\eta)\left(\frac{t}{d^2}\right)^{d^2/2}$, where $\eta$ is related to $\delta$. This turns out to be very similar (up to some constants and low order terms) to the bound that we get from representation theoretic analysis.

As explained above, unitary synthesis is equivalent to having an $\epsilon$-net in the classical oracle model. Therefore, the cardinality bound for $\epsilon$-nets immediately implies that unitary synthesis with a binary function requires an input length of roughly $2 \log d$.

Next, for our diagonal matrix result we again employ a simple principle. Any diagonal matrix in $\unitary(d)$ can be approximated by a diagonal matrix with elements of the form $\omega^\ell_K$, where $\omega_K = e^{\frac{2 \pi i}{K}}$ and $\ell \in [K]$. Note that $\ell$ can be represented by $k = \log K$ bits. It can be shown that the quality of this approximation scales with $\frac{1}{K}$. If we consider a candidate for unitary synthesis, then the number of calls to the diagonal oracle is bounded by the circuit size of the reduction $R$, which is polynomial (we can handle super-polynomial reductions as well); denote its size by $s$. Then so long as $\frac{s}{K}$ is sufficiently small, the outcome of running $R$ with a diagonal oracle versus a $K$-truncated diagonal oracle should not be noticeable. For concreteness let us take $k = \polylog d$ so that $K = d^{\polylog d}$. Then even a huge reduction size cannot notice the truncation. However, the truncated version corresponds to oracle access to a classical function with $\polylog d+O(1)$ \emph{output} bits, which is equivalent to having a binary function whose input size increases \emph{additively} by $\loglog\, d+O(1)$. Therefore, even working with diagonal matrices of infinite precision cuts only an additive $\loglog \,d$ factor from the required input length, compared to the binary-function setting. For example, for unitary synthesis, even in the diagonal matrix model, we require input length of at least $2 \log d - \loglog d - O(1)$.

\section{Preliminaries}
\label{sec:prelims}

Throughout the paper we use approximate equalities (``$\approx$'') and inequalities (``$\lesssim$'' and ``$\gtrsim$'') to indicate relationships between two quantities without specifying universal constants or potentially uninteresting logarithmic factors. For example, for variables $t,d$ we might write $t \gtrsim \frac{d^2}{\eps^2}$ to mean that there are universal constants $c > 0$ and $C \geq 1$ such that $t \geq c \frac{d^2}{\eps^2} + C$.

We let $\unitary(d)$ denote the unitary group over $\C^d$, and $\id_d$ denote the identity element. We slightly abuse notation and let $U \in \unitary(d)$ also denote the quantum channel $U (\cdot) U^\dagger$ over $\C^d$ (which is a superoperator over $\C^{d \times d}$). We let $\hr(d)$ denote the Haar measure over $\unitary(d)$; we sometimes omit the $d$ when it is clear from context. A distribution of unitaries $\udesign$ is \emph{symmetric} if $\udesign^\dagger = \udesign$; that is, sampling a unitary $U \sim \udesign$ and taking its Hermitian conjugate $U^\dagger$ yields the same distribution $\udesign$.

For a superoperator $T$ we consider the following norms.
\begin{definition}
	For a superoperator $T$, the $k\to k$ norm is defined as
	\[
	\| T \|_{k \to k} = \sup_{X \neq 0} \frac{ \| T(X) \|_k}{ \| X \|_k}
	\]
	where $\|X\|_k$ denotes the Schatten-$k$ norm (for $k=2$ this is the Frobenius norm) of an operator $X$.
\end{definition}
Some important special cases include the $2\to 2$ norm (expander norm) and the $1\to 1$ norm. An important derived norm is the completely bounded $1\to 1$ norm, also known as the diamond norm defined as
\[
\| T \|_{\diamond} = \sup_{m \in \N}  \| T \otimes \id_m \|_{1 \to 1}~.
\]

\subsection{Nets}

For any set of unitaries $\net$, we let $\net^{\dagger}$ denote $\{V : V^{\dagger} \in \net\}$. If $\net_1, \net_2$ are two sets then $\net_1 \cdot \net_2 = \{V_1\cdot V_2 : V_1 \in \net_1, V_2\in\net_2\}$. We sometimes denote singleton sets $\{\unitary\}$ simply by $\unitary$.

We now define a metric to quantify how much a set of unitaries ``covers'' the entire unitary group. We say that a unitary is covered by the set $\net$ if it is sufficiently close to $\net$. The usual definition of $\eps$-nets requires that all unitaries are $\eps$-close to $\net$. For our purposes it is important to consider a refinement that we call an $(\eps, \eta)$-net, where all but $\eta$-measure of the unitary group is $\eps$-covered. A formal definition follows.
\begin{definition}
	The $\eps$-\emph{coverage} of a set of unitaries $\net \subseteq \unitary(d)$ is defined as
	\[
	\cvg(\net) = \cvg_\eps(\net) = \left\{ U \in \unitary(d) :   \exists \, V \in \net \text{ such that } \| U( \cdot ) U^\dagger - V ( \cdot )V^\dagger \|_\diamond \leq \eps \right\}~.
	\]
	The $\epsilon$-\emph{exposure} of $\net$ is defined as the measure of unitaries that are not covered:
	\[
	\xps(\net) = \xps_\eps(\net) = \Pr_{U \sim \hr(d)} \Big [ U \not\in \cvg(\net) \Big]~.
	\]
	A set $\net$ is an $(\eps, \eta)$-net if $\xps_\eps(\net) \le \eta$.
	We sometimes denote $\vol(\net) = \vol_\eps(\net) = \hr(\cvg_\eps(\net))$. Note that $\vol_\eps(\net) = 1- \xps_\eps(\net)$.
\end{definition}

The notion of $(\eps, 0)$-net coincides with the usual notion from the literature of $\epsilon$-nets.\footnote{This is not completely straightforward from the definition, since on the face of it, there could exist an exposed measure $0$ set of unitaries that is not empty. However, since any exposed set is necessarily an open set, such a degeneracy is not possible.}
We note that one can use a different distance measure for defining $\eps$-closeness, however most relevant distance metrics would differ by a $\poly(d)$ factor, which for our purposes is usually not significant.
We proceed by providing a counting argument for the minimal cardinality of a net.

\begin{proposition}[Fact 2 in \cite{oszmaniec2021epsilon}]
	There exists a universal constant $c_\diamond > 0$ such that the Haar volume of an $\epsilon$-ball in diamond norm over $d$ dimensional unitary channels is at most $(\epsilon/ c_\diamond)^{d^2-1}$.
\end{proposition}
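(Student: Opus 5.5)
The plan is to bound the Haar measure of a diamond-norm ball $B_\diamond(V,\eps)$ in the space of unitary channels $\mathrm{PU}(d)=\unitary(d)/\mathrm{U}(1)$, which is a compact Lie group of real dimension $d^2-1$. By left-invariance of the Haar measure, the volume of $B_\diamond(V,\eps)$ does not depend on $V$, so it suffices to treat $V=\id_d$.

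\textbf{Step 1: from channels to matrices.} First we show that if $\|U(\cdot)U^\dagger - (\cdot)\|_\diamond \le \eps$, then some phase $e^{\iu\phi}$ satisfies $\|e^{\iu\phi}U - \id_d\| \le c\,\eps$ for a universal constant $c$. The standard route is the following. For a pure input $\ket\psi$, the trace distance between $U\ket\psi$ and $\ket\psi$ equals $\sqrt{1-|\bra\psi U\ket\psi|^2}$. So the diamond bound gives $|\bra\psi U\ket\psi|\ge \sqrt{1-\eps^2/4}$ for every unit $\psi$. Hence the numerical range of $U$ lies outside a disc of radius close to $1$. Since the numerical range of a normal matrix is the convex hull of its eigenvalues $e^{\iu\theta_j}$, all eigenvalues must lie in an arc of angular width $O(\eps)$. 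Rotating by the midpoint phase then yields $\|e^{\iu\phi}U-\id\|=\max_j|e^{\iu(\theta_j+\phi)}-1| = O(\eps)$.

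\textbf{Step 2: volume of operator-norm balls.} The ball is therefore contained in the image, under $\unitary(d)\to\mathrm{PU}(d)$, of the set $\mathrm{U}(1)\cdot B_{\|\cdot\|}(\id,c\eps)$. Its projective Haar measure equals the $\unitary(d)$-Haar measure of that set. We cover the $\mathrm{U}(1)$ factor by $O(1/\eps)$ phases $e^{\iu\phi_k}$ spaced by $\eps$. This gives $\mathrm{U}(1)\cdot B(\id,c\eps)\subseteq\bigcup_k e^{\iu\phi_k}B(\id,2c\eps)$, so the measure is at most $O(1/\eps)\cdot \hr(B_{\|\cdot\|}(\id,2c\eps))$. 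Next we bound $\hr(B_{\|\cdot\|}(\id,r))\le (C r)^{d^2}$ by a volume comparison. Take a maximal $r$-separated packing in operator norm; the balls of radius $r/2$ around its points are disjoint, so the packing has size at most $1/\hr(B(\id,r/2))$. Conversely, a known lower bound on packing numbers of $\unitary(d)$ in operator norm (Szarek's bound, $N\ge (c'/r)^{d^2}$) gives the matching inequality. Equivalently, one can use the exponential map, which near the identity is bi-Lipschitz with Jacobian bounded between universal constants for $r\le 1$. This lets us compare with the Lebesgue volume of an operator-norm ball in the Hermitian matrices, which is $(\Theta(r))^{d^2}$ relative to the normalised total volume. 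Combining the two bounds gives $O(1/\eps)(C\eps)^{d^2}\le(\eps/c_\diamond)^{d^2-1}$ after absorbing constants. For $\eps$ above a constant the claim is trivial, since the right-hand side is at least $1$ once $c_\diamond$ is small.

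\textbf{Main obstacle.} The main obstacle is getting constants uniform in $d$ in Step 2, namely showing $\hr(B(\id,r))\le (Cr)^{d^2}$ with $C$ independent of $d$. I would first try to cite Szarek's metric-entropy estimates for $\unitary(d)$, which give exactly $N(\unitary(d),\|\cdot\|,r)\ge (c/r)^{d^2}$, and convert this to a volume bound via the packing/covering duality. Only if that fails would I do the exponential-map Jacobian computation directly.
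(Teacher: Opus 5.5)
Your argument is correct and follows the standard derivation behind the cited Fact 2. The paper gives no proof of its own and simply imports the statement from Oszmaniec et al., whose result rests on the same two ingredients you use: a comparison between the diamond distance and the phase-minimized operator-norm distance, together with Szarek's metric-entropy bounds for the unitary group. Your numerical-range step, the $O(1/\eps)$ phase covering, and the packing-to-volume conversion all check out, and the absorption into $(\eps/c_\diamond)^{d^2-1}$ works for $d\ge 2$, with $d=1$ trivial.

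One small inaccuracy is in your fallback: the Jacobian of $\exp$ on $\{\|H\|\le r\}$ lies between $(\sin r/r)^{d(d-1)}$ and $1$, not between universal constants. This does not break that route, since only the upper bound $1$ is needed, and a $c^{d^2}$ loss would be absorbed anyway.
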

The following is an immediate consequence.
\begin{corollary}
	\label{cor:net-size}
	Let $\net$ be an $(\eps, \eta)$-net. Then
	\[
	\Big | \net \Big| \geq (1-\eta)\Big(  \frac{c_\diamond}{\eps} \Big)^{d^2-1}~.
	\]
\end{corollary}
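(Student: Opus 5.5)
The plan is a direct volume (union-bound) argument. I will combine the definition of $(\eps,\eta)$-net with the preceding Proposition (Fact 2 of \cite{oszmaniec2021epsilon}). First I will rewrite the $\eps$-coverage as a union of diamond-norm balls:
\[
\cvg_\eps(\net) = \bigcup_{V \in \net} B_\eps(V), \qquad B_\eps(V) = \left\{ U \in \unitary(d) : \| U(\cdot)U^\dagger - V(\cdot)V^\dagger \|_\diamond \le \eps \right\}.
\]
This is immediate from the definition of $\cvg_\eps$: a unitary is covered exactly when it lies in the $\eps$-ball around some element of $\net$.

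Next I will apply countable subadditivity of the Haar measure:
\[
1-\eta \le 1-\xps_\eps(\net) = \vol_\eps(\net) = \hr\big(\cvg_\eps(\net)\big) \le \sum_{V\in\net} \hr\big(B_\eps(V)\big) .
\]
If $\net$ is infinite, the claimed bound is trivial, so I assume $\net$ is finite. By the Proposition, each ball has Haar volume at most $(\eps/c_\diamond)^{d^2-1}$. Here I use that the Haar measure is left-invariant and that $U \mapsto VU$ preserves the diamond distance between unitary channels. Hence every ball $B_\eps(V)$ has the same measure as the ball around the identity, so the Proposition's bound applies uniformly regardless of its centre.

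Combining these gives
\[
1-\eta \le |\net| \,(\eps/c_\diamond)^{d^2-1},
\]
and rearranging yields $|\net| \ge (1-\eta)(c_\diamond/\eps)^{d^2-1}$, as claimed.

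The only point needing care is measurability of the balls, and I do not expect a real obstacle here. Each ball is closed, since the diamond distance is continuous in $U$, so the union of finitely many balls is Borel and the measure manipulations above are valid.
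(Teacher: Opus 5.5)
Your proposal is correct. It is exactly the union-bound volume argument the paper has in mind when it calls this corollary an ``immediate consequence'' of the ball-volume Proposition: the covered set has Haar measure at least $1-\eta$ and is a union of $|\net|$ diamond-norm balls, each of measure at most $(\eps/c_\diamond)^{d^2-1}$. Your remarks on translating every ball to the identity via left-invariance, and on measurability, fill in details the paper leaves implicit.
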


\subsection{Unitary Designs}

A unitary (approximate) $t$-design is a distribution $\udesign$ over unitary matrices that is similar to the Haar random measure so long as the associated channel is only invoked $t$ times. There are various definitions for this object. In this work, we almost always use the notion of ``diamond-design'' (see below), but we sometimes compare with other notions in order to put our results in context.

\begin{definition}[Approximate unitary designs]
	Let $\udesign$ be a distribution over $\unitary(d)$ and define the moment operator
	\[
	\Phi_{\udesign,t}(\cdot) = \E_{U \sim \udesign} \left[ U^{\otimes t} (\cdot) U^{\otimes t,\dagger} \right]~.
	\]
	We say that $\udesign$ is: %
	\begin{enumerate}
		\item A $(t,\delta)$-diamond-design if:
		\[
		\| \Phi_{\udesign,t} - \Phi_{\hr,t} \|_\diamond \leq \delta~.
		\]
		\item A $(t,\delta)$-TPE (tensor-product expander) if: %
		\[
		\| \Phi_{\udesign,t} - \Phi_{\hr,t} \|_{2 \to 2} \leq \delta~.
		\]
		\item A $(t,\delta)$-relative-design if:
		\[
		(1 - \delta) \Phi_{\hr,t} \leq \Phi_{\udesign,t} \leq (1 + \delta) \Phi_{\hr,t}~.
		\]
	\end{enumerate}
\end{definition}

Some relations between the different notions are provided below.
\begin{proposition}
\label{prop:norm-conversions}
	Let $\udesign$ be a distribution and $t$ be some integer. Let $\lambda, \delta, \epsilon$ be such that $\udesign$ is a $(t,\delta)$-diamond-design, $(t,\lambda)$-TPE and $(t,\epsilon)$-relative-design. Then the following hold.
	\begin{enumerate}
		\item $\delta \le \epsilon$
		\item $\lambda \le d^{t/2}\delta$. %
		\item $\delta \le d^t \lambda$
		\item $\epsilon \le d^{2t} \lambda$
	\end{enumerate}
\end{proposition}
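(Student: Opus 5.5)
Write $\Delta = \Phi_{\udesign,t} - \Phi_{\hr,t}$, a Hermiticity-preserving superoperator on $\C^{D\times D}$ with $D = d^t$. All four items then reduce to standard comparisons between norms on $D\times D$ matrices, together with the fact that the moment operators are completely positive and trace preserving. I will treat the items in order, using only elementary norm inequalities and one structural fact about $\Phi_{\hr,t}$.

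\textbf{Item 1} ($\delta\le\epsilon$). The relative-design condition says that $\epsilon\Phi_{\hr,t}-\Delta$ and $\epsilon\Phi_{\hr,t}+\Delta$ are completely positive. This uses the standard convention that $\le$ between superoperators means the difference is CP. Fix $m$ and a Hermitian $X$ on $\C^D\otimes\C^m$. Split $X=X_+-X_-$ into positive parts and apply $\Delta\otimes\id_m$ to each. For $P\ge0$ we have $-\epsilon(\Phi_{\hr,t}\otimes\id)(P)\le(\Delta\otimes\id)(P)\le\epsilon(\Phi_{\hr,t}\otimes\id)(P)$, so $\|(\Delta\otimes\id)(P)\|_1\le\epsilon\Tr P$, because $\Phi_{\hr,t}$ is trace preserving. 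Summing the two parts gives $\|(\Delta\otimes\id)(X)\|_1\le\epsilon\|X\|_1$. The diamond norm of a Hermiticity-preserving map is attained on Hermitian (indeed rank-one) inputs, so $\delta\le\epsilon$.

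\textbf{Items 2 and 3} (TPE versus diamond). For item 3, on $D\times D$ matrices we have $\|Y\|_1\le\sqrt D\|Y\|_2$ and $\|X\|_2\le\|X\|_1$. Hence $\|\Delta\|_{1\to1}\le\sqrt D\,\|\Delta\|_{2\to2}$. Since $\|\Delta\otimes\id_m\|_{2\to2}=\|\Delta\|_{2\to2}$, and the diamond norm is achieved already at $m=D$, we get $\|\Delta\|_\diamond\le D\|\Delta\|_{2\to2}$ after applying the same comparison in dimension $D^2$. This gives $\delta\le d^t\lambda$, which is even looser than needed.

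For item 2, go the other way: $\|Y\|_2\le\|Y\|_1$ and $\|X\|_1\le\sqrt D\|X\|_2$, so $\|\Delta\|_{2\to2}\le\sqrt D\|\Delta\|_{1\to1}\le d^{t/2}\|\Delta\|_\diamond$. This is exactly the claimed factor $d^{t/2}$.

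\textbf{Item 4} ($\epsilon\le d^{2t}\lambda$). This is the main step. The plan is to pass to Choi matrices: $\Phi$ CP is equivalent to $J(\Phi)\ge0$, with $J(\Phi)=\sum_{ij}\Phi(\ketbra{i}{j})\otimes\ketbra{i}{j}$ a $D^2\times D^2$ matrix. We need $-\epsilon J(\Phi_{\hr,t})\le J(\Delta)\le\epsilon J(\Phi_{\hr,t})$.

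First, bound $\|J(\Delta)\|_\infty\le\|J(\Delta)\|_2=\|\Delta\|_{2\to2}$-type quantities. More precisely, $\|J(\Delta)\|_2^2=\sum_{ij}\|\Delta(\ketbra{i}{j})\|_2^2\le D^2\lambda^2$, so $\|J(\Delta)\|_\infty\le D\lambda$.

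Second, lower-bound $J(\Phi_{\hr,t})$ on its support. The Haar moment operator is the projection onto the commutant, spanned by permutation operators. Its Choi matrix is, up to normalization, a projector-like operator whose nonzero eigenvalues are at least of order $1/D$ (Weingarten structure). Since both $\Phi_\udesign$ and $\Phi_\hr$ average over the same group action, $\Delta$ has the same invariance structure, and I expect to argue that $J(\Delta)$ is supported inside the support of $J(\Phi_{\hr,t})$. This support claim is the most delicate point: it may require instead comparing against the identity and using $J(\Phi_{\hr,t})\ge c\,D^{-1}\Pi$.

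Combining the two bounds gives $J(\Delta)\le D\lambda\cdot D\,J(\Phi_{\hr,t})$, that is $\epsilon\le D^2\lambda=d^{2t}\lambda$. If the eigenvalue bound only comes out as $\Omega(1/D)$ up to constants, the slack between $D^2$ and the tighter constants absorbs them.
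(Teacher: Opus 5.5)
Your items 1--3 are correct and are the standard arguments; the paper itself gives no proof and only cites the literature. For item 1, the sandwich $-\epsilon(\Phi_{\hr,t}\otimes\id)(P)\le(\Delta\otimes\id)(P)\le\epsilon(\Phi_{\hr,t}\otimes\id)(P)$, together with attainment of the diamond norm on rank-one inputs, gives $\delta\le\epsilon$. For items 2 and 3, the Schatten-norm comparisons give exactly $\lambda\le d^{t/2}\delta$ and $\delta\le d^{t}\lambda$.

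Item 4, which you rightly call the main step, has a genuine gap. The two facts that carry it are asserted rather than proved, and the reasons you give do not work.
\begin{itemize}
\item The support claim cannot come from $\Delta$ sharing the Haar invariance. In general $\udesign$, and hence $\Phi_{\udesign,t}$, is not invariant under $U\mapsto VUW$.
\item The fallback of comparing with the identity fails for $t\ge 2$. There $J(\Phi_{\hr,t})$ is singular, so $J(\Delta)\le D\lambda\,\id$ gives no bound of the form $J(\Delta)\le c\,J(\Phi_{\hr,t})$ unless you know where $J(\Delta)$ is supported.
\item ``Eigenvalues of order $1/D$'' is not enough, and there is no slack to absorb constants. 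For $t=1$ and $\udesign$ the point mass at $\id_d$, one has $\lambda=1$ and $\epsilon=d^2-1$, so item 4 is essentially tight. You need the smallest nonzero eigenvalue of $J(\Phi_{\hr,t})$ to be at least exactly $1/D$.
\end{itemize}

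Both facts are true and short to prove. Write $|X\rangle\rangle=(X\otimes\id)\sum_i\ket{i}\ket{i}$, so that $J(\Phi_{\udesign,t})=\E_{U\sim\udesign}|U^{\otimes t}\rangle\rangle\langle\langle U^{\otimes t}|$ for any $\udesign$.

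\emph{Support.} A vector $v$ lies in the kernel of $J(\Phi_{\hr,t})$ iff $\E_{U\sim\hr}|\langle\langle U^{\otimes t}|v\rangle|^2=0$. By continuity and full support of the Haar measure, this holds iff $v\perp|U^{\otimes t}\rangle\rangle$ for every $U$. Hence the range of $J(\Phi_{\hr,t})$ is $\mathcal S=\mathrm{span}\{|U^{\otimes t}\rangle\rangle:U\in\unitary(d)\}$, which contains the range of every $J(\Phi_{\udesign,t})$. So $J(\Delta)$ is supported in $\mathcal S$ because each Choi vector is, not because of any invariance of $\Delta$.

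\emph{Spectrum.} Schur--Weyl duality $U^{\otimes t}\cong\bigoplus_\mu q_\mu(U)\otimes\id_{m_\mu}$ and Schur orthogonality give $J(\Phi_{\hr,t})\cong\bigoplus_\mu d_\mu^{-1}\,\id_{d_\mu^2}\otimes|\id_{m_\mu}\rangle\rangle\langle\langle\id_{m_\mu}|$. Its nonzero eigenvalues are $m_\mu/d_\mu\ge m_\mu^2/D\ge 1/D$, since $\sum_\mu d_\mu m_\mu=D$.

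Combined with your bound $\|J(\Delta)\|_\infty\le\|J(\Delta)\|_2\le D\lambda$, this yields $\pm J(\Delta)\le D\lambda\,\Pi_{\mathcal S}\le D^2\lambda\,J(\Phi_{\hr,t})$, which is item 4.
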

The proofs for the different relations appear in \cite{low2010pseudo,brandao2016local,Mele2024introductiontohaar,CHHLMT24}.

The following proposition can be derived as a corollary of~\cite[Corollary 2.11]{CHHLMT24}; we provide an elementary proof for completeness.
\begin{proposition}
	If $\udesign$ is a symmetric (meaning $\udesign = \udesign^\dagger$) $(t,\delta)$-diamond design, then it is also a $(t,\delta)$-TPE.
\end{proposition}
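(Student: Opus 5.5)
The plan is to show that for $\Delta := \Phi_{\udesign,t} - \Phi_{\hr,t}$ we have $\|\Delta\|_{2\to 2} \le \|\Delta\|_{1\to 1}$. This suffices, because $\|\Delta\|_{1\to 1} \le \|\Delta\|_\diamond \le \delta$ by definition of the diamond norm (take $m=1$). The key structural fact to establish is that symmetry of $\udesign$ makes $\Delta$ self-adjoint with respect to the Hilbert--Schmidt inner product $\langle X, Y\rangle = \Tr(X^\dagger Y)$ on $\C^{d^t \times d^t}$. Once that holds, the $2\to 2$ norm equals the spectral radius, and each eigenvalue can be bounded using the $1\to 1$ norm.

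\textbf{Step 1 (self-adjointness).} For a fixed unitary $W = U^{\otimes t}$, the map $X \mapsto W X W^\dagger$ has Hilbert--Schmidt adjoint $X \mapsto W^\dagger X W$. This follows from cyclicity of trace: $\Tr((WXW^\dagger)^\dagger Y) = \Tr(X^\dagger W^\dagger Y W)$. Averaging over $U \sim \udesign$ gives
\[
\Phi_{\udesign,t}^* = \E_{U\sim \udesign}\big[ U^{\dagger\otimes t} (\cdot) U^{\otimes t}\big] = \Phi_{\udesign^\dagger,t} = \Phi_{\udesign,t},
\]
where the last equality uses $\udesign = \udesign^\dagger$. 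The Haar measure is inversion-invariant, so the same argument shows $\Phi_{\hr,t}$ is self-adjoint. Hence $\Delta^* = \Delta$.

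\textbf{Step 2 (spectral bound).} Since $\Delta$ is a self-adjoint operator on the finite-dimensional Hilbert space $(\C^{d^t\times d^t}, \langle\cdot,\cdot\rangle)$, the spectral theorem gives $\|\Delta\|_{2\to 2} = \max_i |\lambda_i|$, the maximum being over its real eigenvalues. Take an eigenvector $X \neq 0$ with $\Delta(X) = \lambda X$, where $|\lambda| = \|\Delta\|_{2\to 2}$. Then
\[
\|\Delta(X)\|_1 = |\lambda|\,\|X\|_1,
\]
so $|\lambda| \le \|\Delta\|_{1\to 1}$. Note that $X$ need not be Hermitian; this is harmless, because the $1\to 1$ norm as defined is a supremum over all nonzero operators. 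Combining the steps yields $\|\Delta\|_{2\to 2} \le \|\Delta\|_{1\to 1} \le \|\Delta\|_\diamond \le \delta$.

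The only delicate point is Step 1, namely checking that the adjoint is taken with respect to the right inner product and that symmetry is exactly what makes the averaged adjoint coincide with the original map; the rest is routine. An alternative route to Step 2 is Riesz--Thorin interpolation, $\|\Delta\|_{2\to2} \le \|\Delta\|_{1\to1}^{1/2}\|\Delta\|_{\infty\to\infty}^{1/2}$, combined with $\|\Delta\|_{\infty\to\infty} = \|\Delta^*\|_{1\to 1} = \|\Delta\|_{1\to1}$. However, the eigenvector argument is more elementary and is the one I would use.
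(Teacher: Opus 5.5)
Your proof is correct, but it takes a different and more direct route than the paper. Both arguments show that the spectral radius of $\Delta=\Phi_{\udesign,t}-\Phi_{\hr,t}$ is at most $\|\Delta\|_\diamond$, and both use symmetry only to identify that spectral radius with $\|\Delta\|_{2\to2}$.

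The paper never looks at eigenvectors. It passes to the $m$-fold sequential composition $\udesign_m$ and combines three facts:
\begin{enumerate}
\item for symmetric ensembles the TPE error is exactly multiplicative, $\lambda_m=\lambda^m$;
\item the diamond error is submultiplicative, $\delta_m\le\delta^m$ (a cited lemma);
\item the dimension-dependent conversion $\lambda_m\le d^{t/2}\delta_m$ holds.
\end{enumerate}
Together these give $\lambda\le d^{t/(2m)}\delta$ for every $m$, and letting $m\to\infty$ removes the dimension factor.

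This is Gelfand's spectral radius formula in disguise. Note that $\Phi_{\udesign_m,t}-\Phi_{\hr,t}=\Delta^m$, because $\Phi_{\hr,t}$ is a projector absorbed by $\Phi_{\udesign,t}$ on either side. The identity $\lambda_m=\lambda^m$ is then exactly the self-adjointness of $\Delta$ that you prove in Step~1.

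You bound the spectral radius by an induced norm in one line via an eigenvector. This makes the argument self-contained, with no external lemmas. It also yields the slightly stronger bound $\|\Delta\|_{2\to2}\le\|\Delta\|_{1\to1}$, with the unstabilized norm. Your remark that the eigenvector $X$ need not be Hermitian is consistent with the paper's definition of $\|\cdot\|_{k\to k}$ as a supremum over all nonzero operators.

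The paper's amplification template buys something different. It only needs a submultiplicative norm that controls the $2\to2$ norm up to a dimension factor, and it is phrased entirely through operational properties of composed designs.
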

\begin{proof}
	Let $m$ be some integer, and consider $\udesign_m$ as the $m$-composition of $\udesign$. Namely, each entry in $\udesign_m$ is a sequential composition of $m$ i.i.d elements from $\udesign$. Let $\lambda_m, \delta_m$ denote its $t$-design approximation with respect to the $2\to 2$ and diamond norms respectively. Then it holds that $\lambda_m \le d^{t/2} \delta_m$ by the definitions of the norms (see e.g.\ \cite[Proposition~42]{Mele2024introductiontohaar}). Furthermore, if $\udesign$ is symmetric then it holds that $\lambda_m = \lambda^m$ (see e.g.\ \cite[Footnote~4]{Mele2024introductiontohaar}). Lastly, it holds that $\delta_m \le \delta^m$ \cite[Lemma 4.11]{MPSY24}. We conclude that for all $m$, $\lambda^m = \lambda_m \le d^{t/2} \delta_m \le d^{t/2} \delta^m$. That is, for all $m$, $\lambda \le d^{t/(2m)} \delta$, which implies that $\lambda \le \delta$.
\end{proof}

Finally, the following gives operational consequences of the definitions of diamond and relative error designs.
\begin{proposition}\label{prop:designdistinguish}
	The following relations exist between designs and distinguishability via quantum query algorithms.
	\begin{enumerate}
		\item 	A distribution $\udesign$ is a $(t,\delta)$-diamond-design if and only if for any \emph{non-adaptive} $t$-query oracle algorithm $\cA$ with $\{0,1\}$ output it holds that
		\[
		\abs{\E_{U \sim \udesign}[\cA^{U}] - \E_{U \sim \hr}[\cA^{U}]} \le \delta~.
		\]

		\item A distribution $\udesign$ is a $(t,\delta)$-relative-design only if for any (possibly adaptive) $t$-query oracle algorithm $\cA$ with $\{0,1\}$ output it holds that
		\[
		\abs{\E_{U \sim \udesign}[\cA^{U}] - \E_{U \sim \hr}[\cA^{U}]} \le \delta~.
		\]
        Here, the algorithm $\cA$ can only query controlled-$U$ (but not its inverse).
		Note that there is no equivalence between relative-designs and indistinguishability by algorithms here, only a forward implication.
	\end{enumerate}
\end{proposition}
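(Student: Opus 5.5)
We prove the two items separately. In both, the natural object is the pair of channels $\Phi_{\udesign,t}$ and $\Phi_{\hr,t}$ acting on $(\C^d)^{\otimes t}$, tensored with an arbitrary ancilla.

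For item 1, the plan is to show that every non-adaptive $t$-query algorithm has the following canonical form. It prepares a state $\rho$ on $(\C^d)^{\otimes t}\otimes \C^m$, where $m$ is the ancilla dimension. It then applies $U^{\otimes t}\otimes \id_m$, since parallel, non-adaptive queries can all be moved to a single layer. Finally it performs a two-outcome POVM $\{M,\id-M\}$.

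Averaged over $U$, the output state is $(\Phi_{\cdot,t}\otimes\id_m)(\rho)$. The acceptance probability is therefore $\Tr[M(\Phi_{\udesign,t}\otimes \id_m)(\rho)]$, and similarly for Haar. The gap is thus $\Tr[M\,(\Delta\otimes\id_m)(\rho)]$, where $\Delta=\Phi_{\udesign,t}-\Phi_{\hr,t}$.

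For the forward direction, we use that $(\Delta\otimes \id_m)(\rho)$ is a traceless Hermitian operator. The Helstrom bound then gives
\[
\sup_{0\le M\le\id}\Tr[MX]=\tfrac12\|X\|_1
\]
for traceless Hermitian $X$. Hence the gap is at most $\tfrac12\|\Delta\|_\diamond\le\delta$. (If the paper's convention drops the $\tfrac12$, the gap is at most $\|\Delta\|_\diamond$ and the statement still holds as written.)

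For the converse, we use the standard fact that the diamond norm of a Hermiticity-preserving difference of channels is attained on pure states $\rho=\ketbra{\psi}{\psi}$ with $m=d^t$ ancilla. Choosing $M$ to be the projector onto the positive part of $(\Delta\otimes\id)(\rho)$ then makes the algorithm achieve the norm. This is the only place where the constant convention matters, and it is the step needing care: the "if and only if" is exact only up to the factor $\tfrac12$ between trace distance and diamond norm. I would fix the convention to match the Helstrom statement and note it explicitly.

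For item 2, an adaptive algorithm makes controlled-$U$ queries interleaved with fixed unitaries $A_0,\dots,A_t$ and ends with a measurement. The main obstacle is that adaptivity prevents writing the output as a linear function of $\Phi_{\udesign,t}$ applied to a fixed state.

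The plan is the following standard argument. First, expand each controlled-$U$ as $\ketbra{0}{0}\otimes\id+\ketbra{1}{1}\otimes U$. Then, purify and write the final unnormalized accept operator as a sum of terms that are linear in $U^{\otimes a}\otimes \bar U^{\otimes b}$ with $a,b\le t$. To avoid partial powers, one can pad with dummy tensor factors so that only $U^{\otimes t}\otimes\bar U^{\otimes t}$ appears.

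More cleanly, I would use the known "unrolling" of an adaptive $t$-query algorithm via teleportation/postselection, or equivalently the Choi representation. The acceptance probability can be written as
\[
\Pr[\text{accept}]=\Tr\big[P\,(\Phi_{\cdot,t}\otimes\id)(\sigma)\big]
\]
for some fixed positive operator $\sigma$ and a fixed positive operator $P$ (a quantum comb/tester) that does not depend on $U$. The controlled version is handled by including the control qubits in the ancilla, with the identity branch absorbed.

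Positivity is then the key. Writing $\Phi_{\hr,t}$ for the Haar case, the relative-design condition gives $(1\pm\delta)\Phi_{\hr,t}$ as bounds on $\Phi_{\udesign,t}$ in the completely positive order. Therefore $p_\udesign\in[(1-\delta)p_\hr,(1+\delta)p_\hr]$, where $p$ denotes acceptance probability, so $|p_\udesign-p_\hr|\le\delta p_\hr\le\delta$.

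The crux is justifying the tester representation with positive $P$ and $\sigma$ independent of $U$. I would cite the quantum comb formalism, or Lemma~4.x of \cite{MPSY24}, which proves exactly this for controlled queries.
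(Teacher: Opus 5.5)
Your item~1 correctly unpacks what the paper calls ``by definition,'' and your caveat is right. With the paper's unnormalized diamond norm, a non-adaptive distinguisher has advantage at most $\tfrac12\|\Phi_{\udesign,t}-\Phi_{\hr,t}\|_\diamond$, so the converse only certifies a $(t,2\delta)$-diamond-design. For item~2 with ordinary forward queries, your argument is the standard one behind the result the paper simply cites, \cite[Lemma 25]{kretschmer2021quantum}. That is: represent the adaptive algorithm as a tester (equivalently, parallel queries plus postselected gate teleportation, at a $U$-independent cost factor $d^{2t}$), then apply the completely positive ordering. Multiplicative error is exactly what survives the postselection.

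The gap is your one-line treatment of controlled queries (``including the control qubits in the ancilla, with the identity branch absorbed''). This step fails, and it cannot be repaired under the stated hypothesis. The operator $\Phi_{\udesign,t}$ only sees $U^{\otimes t}(\cdot)U^{\dagger\otimes t}$, so it is blind to global phases. By contrast, the $t$-fold Choi operator of $\ketbra{0}{0}\otimes\id+\ketbra{1}{1}\otimes U$ contains cross terms of degree $a$ in $U$ and degree $b$ in $\bar U$ with $a\neq b$, and their averages over $\udesign$ are unconstrained. Concretely, take the uniform distribution on the single-qubit Paulis $\{I,X,Y,Z\}$. It has $\Phi_{\udesign,1}=\Phi_{\hr,1}$, so it is a $(1,0)$-relative design. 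Yet a single Hadamard test on input $\ket{0}$ accepts with probability $\tfrac12\bigl(1+\mathrm{Re}\,\E_{U}\bra{0}U\ket{0}\bigr)$, which is $3/4$ under the Paulis and $1/2$ under Haar.

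So your argument, and the proposition itself, holds only for uncontrolled forward queries, or under an extra hypothesis such as invariance of $\udesign$ under $U\mapsto e^{\iu\theta}U$. Under that hypothesis the $a\neq b$ terms average to zero, and each remaining term is the image of the Choi operator of $\Phi_{\udesign,a}$ ($a\le t$) under a fixed completely positive map. Relative error at level $t$ passes to level $a$ by appending a state and tracing out, so your positivity argument then goes through. In particular, you should not expect a lemma in \cite{MPSY24} or elsewhere to establish the controlled case in general.
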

\begin{proof}
	The first item follows directly by definition. For the second item see~\cite[Lemma 25]{kretschmer2021quantum}.
\end{proof}

\section{From Designs to Nets}
\label{sec:designstonets}

In this section, we present a new method for converting (approximate) designs into $\epsilon$-nets. Our approach differs from previous works in two main respects. First, whereas previous works relied on TPE properties, we use the diamond norm or relative distance. Indeed, we show that even an extremely mild approximation under these metrics implies $\epsilon$-nets. Second, we do not seek to show that the support of the approximate design itself constitutes an $\epsilon$-net, as in previous work. Instead, we allow simple transformations of the approximate design, such as squaring or closing the set under inverse. In terms of techniques, we show a very elementary connection between approximate designs and $(\epsilon,\eta)$-nets, and then show how (standard) $\eps$-nets follow in a straightforward manner.

A central building block for our construction is the family of channel tomography procedures from Haah et al.\ \cite{haah2023query}.

\begin{proposition}[{Unitary Tomography~\cite{haah2023query}}]\label{prop:channeltom} There are procedures $T_{i, \epsilon, \eta, d}$ that make oracle queries to an unknown unitary $U \in \unitary(d)$ and produce a classical description of a unitary $\hat{U}$ such that $\| U (\cdot) U^\dagger - \hat{U} (\cdot) \hat{U}^\dagger \|_\diamond \leq \eps$ with probability at least $1-\eta$, with the following parameters:
	\begin{enumerate}
		\item $T_{1,\epsilon, \eta, d}$ makes $t = \Theta(\frac{d^2}{\eps^2} \ln \frac{1}{\eta})$ non-adaptive queries to $U$.
		\item $T_{2,\epsilon, \eta, d}$ makes $t = \Theta(\frac{d^2}{\eps} \ln \frac{1}{\eta})$ adaptive queries to $U$.
	\end{enumerate}
\end{proposition}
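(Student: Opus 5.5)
The plan is to derive the statement from the results of Haah et al.~\cite{haah2023query} together with a standard confidence-amplification step. From that work we take two base procedures. The first is a non-adaptive procedure making $O(d^2/\eps^2)$ queries. The second is an adaptive, Heisenberg-limited procedure making $O(d^2/\eps)$ queries. Each outputs a classical description of a unitary $\hat U$ that is $\eps$-close to $U$ with some constant success probability, say $2/3$.

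The first step is to fix the error metric. Haah et al.\ state their guarantee in a phase-invariant operator-norm distance, essentially $\min_{\phi}\|U - e^{i\phi}\hat U\|$. This metric is equivalent to the diamond distance between the channels $U(\cdot)U^\dagger$ and $\hat U(\cdot)\hat U^\dagger$ up to a universal constant factor, since both are bounded above and below by constant multiples of the same quantity. So after rescaling $\eps$ by a constant, each base procedure gives diamond error $\eps$ with probability at least $2/3$, with the stated query counts. If the base procedure returns a non-unitary estimate, we first round it to the nearest unitary via the polar decomposition, which loses at most another constant factor in the error.

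The second step amplifies the success probability from $2/3$ to $1-\eta$. Run the base procedure independently $k = \Theta(\ln(1/\eta))$ times with target error $\eps/3$, obtaining estimates $\hat U_1,\dots,\hat U_k$. Then output any $\hat U_i$ whose diamond distance is at most $2\eps/3$ from more than $k/2$ of the other estimates. This selection is a purely classical computation and uses no further queries.

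The selection is correct on the following event. By a Chernoff bound, with probability at least $1-\eta$ more than $k/2$ of the estimates are $\eps/3$-close to $U$; call these the good estimates. On this event:
\begin{enumerate}
\item Every good estimate passes the test, because by the triangle inequality it lies within $2\eps/3$ of every other good estimate.
\item Any estimate that passes the test lies within $2\eps/3$ of at least one good estimate, since two sets each of size more than $k/2$ must intersect. Hence it is within $2\eps/3+\eps/3=\eps$ of $U$.
\end{enumerate}
Therefore the output is $\eps$-close to $U$ in diamond distance with probability at least $1-\eta$.

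The query count is multiplied by $\Theta(\ln(1/\eta))$, which gives exactly the stated bounds. Non-adaptivity is preserved in $T_{1}$, because the $k$ repetitions are independent and can be issued in parallel. Adaptivity in $T_{2}$ is, of course, allowed.

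The main obstacle is not the amplification but importing the base guarantees correctly. In particular, we must verify that the $O(d^2/\eps)$ adaptive bound of~\cite{haah2023query} applies with only forward queries to $U$, and not also to $U^\dagger$ or controlled-$U$. If it needs controlled access, we would note this as part of the query model. We must also check that the metric conversions cost only constant factors and not $\poly(d)$ factors.
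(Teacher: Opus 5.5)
Your proposal is correct and is essentially what the paper does. The paper gives no proof beyond citing \cite{haah2023query}, whose bounds already carry the $\ln(1/\eta)$ factor from the same kind of repeat-and-select boosting you describe, and whose Heisenberg-limited algorithm uses only forward applications of $U$ on a single qudit, so your concern about controlled or inverse access does not arise. One small fix to your selection rule: if $g$ is the number of good estimates, a good estimate is only guaranteed to be $2\eps/3$-close to $g-1$ \emph{other} estimates, which need not exceed $k/2$. So either count the estimate itself in the test, or use the Chernoff bound to guarantee, say, at least $0.6k$ good estimates.
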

We sometimes remove some of the subscripts when they are clear from the context.

We can now state and prove our first technical lemma, which converts approximate designs into relaxed nets.

\begin{lemma}[Designs to nets]
	\label{lem:designs-to-nets}
	Let $\udesign$ be a distribution over $\unitary(d)$ and $\net = \support(\udesign)$. Let $\eps \in (0,1)$. Then if $\udesign$ is a $(t,\delta)$-diamond-design (respectively relative-design), for $t = \Theta(\frac{d^2}{\eps^2}\ln \frac{1}{\eta_0})$ (respectively $t = \Theta(\frac{d^2}{\eps}\ln \frac{1}{\eta_0})$) then $\net$ is an $(\eps,\eta)$-net, for $\eta = \frac{\delta+\eta_0}{1-\eta_0}$. In particular if $\eta_0 \le 1/6$ and $\delta \le 1/6$ then $\eta \le 2/5$.
\end{lemma}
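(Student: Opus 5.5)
The plan is a distinguishing argument. I would build a query algorithm $\cA$ that outputs $1$ whenever its oracle is "far from $\net$". Because $\cA$ rarely outputs $1$ on $\udesign$, the design property forces it to rarely output $1$ on Haar-random $U$, and this bounds the exposure. Concretely, $\cA^U$ runs the tomography procedure of Proposition~\ref{prop:channeltom} with accuracy $\eps/2$ and failure probability $\eta_0$:
\begin{itemize}
\item in the diamond case, $T_{1,\eps/2,\eta_0,d}$, which makes $\Theta(\frac{d^2}{\eps^2}\ln\frac{1}{\eta_0})$ non-adaptive queries;
\item in the relative case, $T_{2,\eps/2,\eta_0,d}$, which makes $\Theta(\frac{d^2}{\eps}\ln\frac1{\eta_0})$ adaptive queries.
\end{itemize}
The factor $2$ in the accuracy is absorbed into the $\Theta(\cdot)$ on $t$. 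The algorithm obtains $\hat U$ and outputs $1$ iff there is no $V\in\net$ with $\dnorm{\hat U(\cdot)\hat U^\dagger - V(\cdot)V^\dagger}\le \eps/2$. This final check is a fixed post-processing of classical data and makes no further queries. Efficiency is irrelevant since we only need statistical indistinguishability, so it does not matter that $\net$ may be infinite.

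\emph{Step 1 (behaviour on the design).} If $U\sim\udesign$, then $U\in\net$. When tomography succeeds, which happens with probability at least $1-\eta_0$, $\hat U$ is $\eps/2$-close to $U$, so $V=U$ witnesses closeness and $\cA$ outputs $0$. Hence $\E_{U\sim\udesign}[\cA^U]\le \eta_0$.

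\emph{Step 2 (transfer to Haar).} By Proposition~\ref{prop:designdistinguish}, $\E_{U\sim\hr}[\cA^U]\le \eta_0+\delta$:
\begin{itemize}
\item in the diamond case, item~1 applies to the non-adaptive $T_1$;
\item in the relative case, item~2 applies to the adaptive $T_2$.
\end{itemize}
For the relative case I would remark that the tomography procedure of \cite{haah2023query} can be run with (controlled) forward queries to $U$ only, as item~2 requires. I regard checking the exact query model of $T_2$ as the only point needing care; everything else is elementary.

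\emph{Step 3 (exposed unitaries trigger $\cA$).} Fix $U\notin\cvg_\eps(\net)$ and condition on tomography succeeding, which has probability at least $1-\eta_0$. If some $V\in\net$ were $\eps/2$-close to $\hat U$, the triangle inequality for the diamond norm would make $V$ $\eps$-close to $U$, contradicting exposure. So $\cA$ outputs $1$. Averaging over Haar $U$ gives $\E_{U\sim\hr}[\cA^U]\ge(1-\eta_0)\,\xps_\eps(\net)$.

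Combining Steps 2 and 3 yields
\[
\xps_\eps(\net)\le \frac{\delta+\eta_0}{1-\eta_0},
\]
that is, $\net$ is an $(\eps,\eta)$-net. For the numerical claim, plugging in $\eta_0,\delta\le 1/6$ gives $\eta\le \frac{1/3}{5/6}=2/5$.
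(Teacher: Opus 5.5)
Your proof is correct and is essentially the paper's argument: a tomography-based distinguisher that accepts with probability at most $\eta_0$ on $\udesign$ and at least $(1-\eta_0)\,\xps_\eps(\net)$ on Haar, so the design property bounds the exposure. The differences are cosmetic. The paper argues by contradiction and accepts when $\hat U$ is $\eps/3$-close to the exposed set $S$, whereas you argue directly and accept when $\hat U$ is $\eps/2$-far from $\net$, which is the version sketched in the paper's overview. The forward-query point you flag is also fine, since the adaptive algorithm of \cite{haah2023query} uses only applications of $U$ itself, on a single qudit.
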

\begin{proof}
	We use a tomography argument. Suppose $\net$ were not an $(\eps,\eta)$-net. Then there must exist a set $S \subseteq \unitary(d)$ of measure at least $\eta$ such that every $V \in S$ is at least $\eps$-far, in diamond distance, from every unitary in $\net$.

	Consider the following statistical test $\cA$ to distinguish $\udesign$ from the Haar measure, set $i=1$ for the diamond-design setting and $i=2$ for the relative-design setting.
	\begin{enumerate}
		\item Run the algorithm $T^U_{i, \epsilon/3, \eta_0, d}$ from Proposition~\ref{prop:channeltom} to obtain $\hat{U}$.

		\item If $\hat{U}$ is within $\eps/3$ of a point in $S$, then output $1$. Otherwise, output $0$.
	\end{enumerate}
	By Proposition~\ref{prop:channeltom} and Proposition~\ref{prop:designdistinguish}, it follows that in both settings of this lemma statement, $\cA$ is a distinguisher for the $(t,\delta)$ design, according to the respective metric. Furthermore, the invocation of $T$ within $\cA$ outputs $\hat{U}$ at distance at most $\epsilon/3$ from its input $U$ with probability at least $1-\eta_0$.

	We now analyze the acceptance probability of $\cA$ on the distributions $\hr$ and $\udesign$.

	When $U$ is drawn from the Haar measure, then
	\begin{align}
		\E_{U \sim \hr} [\cA^U] \geq (1 - \eta_0) \eta~.
	\end{align}
	This is the probability of the tomography algorithm succeeding, and conditioned on it succeeding, the probability that it samples a unitary $U \in S$ which is at least $\eta$. By definition the estimated unitary $\hat{U}$ is at most $\eps/3$-far from $S$.

	On the other hand, when $U$ is sampled from $\udesign$, then
	\begin{align}
		\E_{U \sim \udesign} [\cA^U] \leq \eta_0
	\end{align}
	because if the tomography procedure succeeds, the estimated unitary $\hat{U}$ will be within $\eps/3$ of $U \in \net$, which by definition is at least $(2\eps/3)$-far from $S$. Thus the procedure $\cA$ can only accept when the tomography algorithm fails.

	Thus the statistical test can distinguish between $\udesign$ and the Haar measure with advantage at least $(1 - \eta_0)\eta - \eta_0$. We get a contradiction if this value is larger than $\delta$. Namely, we have that $\net$ is an $(\epsilon, \eta)$-net with $\eta = \frac{\delta + \eta_0}{1-\eta_0}$. In particular, if $\eta_0 < 1/6$ and $\delta < 1/6$, then $\eta < 2/5$.
\end{proof}

Up to this point, we have shown that the support of a design constitutes a relaxed $\epsilon$-net. We now show an extremely simple composition lemma that allows us to go from a relaxed net to a full net. In particular, we show how to compose two relaxed nets into a full net.
\begin{lemma}\label{lem:netcompose}
	Let $\net_i$ be an $(\eps_i,\eta_i)$-relaxed net for $i=1,2$. If $\eta_1 + \eta_2 < 1$, then $\net_1 \cdot \net_2^{\dagger}$ is an $(\epsilon_1+\epsilon_2)$-net.
\end{lemma}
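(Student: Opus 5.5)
Fix an arbitrary target unitary $T \in \unitary(d)$. We show that some $W \in \net_1 \cdot \net_2^\dagger$ satisfies $\dnorm{T(\cdot)T^\dagger - W(\cdot)W^\dagger} \le \eps_1+\eps_2$. The plan follows the averaging argument sketched in the technical overview, adapted to the product $\net_1\cdot\net_2^\dagger$.

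Sample $U \sim \hr(d)$ and consider the two random unitaries $TU$ and $U$. By left-invariance of the Haar measure, $TU$ is Haar distributed, and $U$ trivially is. By definition of exposure, $\Pr[TU \notin \cvg_{\eps_1}(\net_1)] = \xps_{\eps_1}(\net_1) \le \eta_1$ and $\Pr[U \notin \cvg_{\eps_2}(\net_2)] \le \eta_2$. Since $\eta_1+\eta_2<1$, the union bound gives positive probability that both events fail. Hence there is a specific $U$ together with $V_1 \in \net_1$ and $V_2 \in \net_2$ such that
\[
\dnorm{TU(\cdot)U^\dagger T^\dagger - V_1(\cdot)V_1^\dagger} \le \eps_1, \qquad \dnorm{U(\cdot)U^\dagger - V_2(\cdot)V_2^\dagger} \le \eps_2 .
\]

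It remains to show that $W = V_1 V_2^\dagger$ is $(\eps_1+\eps_2)$-close to $T$ in diamond distance. Write $\mathcal{U}, \mathcal{V}_i, \mathcal{T}$ for the channels. Then $\mathcal{T} = (\mathcal{T}\mathcal{U})\circ \mathcal{U}^{-1}$ and $\mathcal{W} = \mathcal{V}_1 \circ \mathcal{V}_2^{-1}$. Telescoping,
\[
\mathcal{T}\mathcal{U}\circ\mathcal{U}^{-1} - \mathcal{V}_1\circ\mathcal{V}_2^{-1} = (\mathcal{T}\mathcal{U}-\mathcal{V}_1)\circ \mathcal{U}^{-1} + \mathcal{V}_1\circ(\mathcal{U}^{-1}-\mathcal{V}_2^{-1}).
\]
The diamond norm is submultiplicative, and unitary channels have diamond norm $1$, so the first term is at most $\eps_1$. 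For the second term we need
\[
\dnorm{\mathcal{U}^{-1}-\mathcal{V}_2^{-1}} = \dnorm{\mathcal{U}-\mathcal{V}_2}.
\]
This holds because $\mathcal{U}^{-1}-\mathcal{V}_2^{-1} = \mathcal{U}^{-1}\circ(\mathcal{V}_2-\mathcal{U})\circ\mathcal{V}_2^{-1}$, and composing with unitary channels on either side preserves the diamond norm. The triangle inequality then finishes the bound.

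The only point needing care is this unitary invariance of the diamond norm under composition with $\mathcal{U}^{-1}$ and $\mathcal{V}_2^{-1}$. It follows from submultiplicativity applied in both directions, and is the step to verify explicitly. Nothing probabilistic beyond the union bound is needed. The "closure" subtlety about measure-zero exposed sets does not arise, because the argument is pointwise for every $T$.
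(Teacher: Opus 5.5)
Your proof is correct and is essentially the paper's argument. The paper fixes the target (called $U$ there), finds a point in $\cvg_{\eps_1}(\net_1)\cap\cvg_{\eps_2}(U\net_2)$ by the same union bound, and finishes with the triangle inequality plus unitary invariance of the diamond norm. You translate the Haar point instead of the net, and under the substitution $S=TU$ your two conditions say exactly that $S$ lies in $\cvg_{\eps_1}(\net_1)\cap\cvg_{\eps_2}(T\net_2)$.
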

\begin{proof}
	Let $U$ be a unitary, and consider the nets $\net_1$ and $U \cdot \net_2$. It holds that $\xps_{\eps_1}(\net_1) + \xps_{\eps_2}(U \net_2) = \eta_1 + \eta_2 < 1$. Therefore, $\cvg_{\eps_1}(\net_1) \cap \cvg_{\eps_2}(U \net_2)$ is nonempty. Let $T$ be a unitary in this intersection. Then by definition there exist $V_1 \in \net_1$, $V_2 \in \net_2$ such that both $\| T( \cdot ) T^\dagger - V_1 ( \cdot )V_1^\dagger \|_\diamond \leq \eps_1$ and $\| T( \cdot ) T^\dagger - (U V_2) ( \cdot )(UV_2)^\dagger \|_\diamond \leq \eps_2$. By triangle inequality we have $\| V_1( \cdot ) V_1^\dagger - (U V_2) ( \cdot )(UV_2)^\dagger \|_\diamond \leq \eps_1 + \eps_2$. By unitary invariance of the metric, we have $\| (V_1V_2^{\dagger})( \cdot ) (V_1V_2^{\dagger})^\dagger - U  ( \cdot )U^\dagger \|_\diamond \leq \eps_1 + \eps_2$. It follows that $U \in \cvg_{\eps_1+\eps_2}(\net_1 \net_2^{\dagger})$ and the lemma follows.
\end{proof}

Using the above, if we have a good enough net, we can compose it with its inverse, or even with itself.
\begin{proposition}
	If $\net$ is an $(\eps, \eta)$-relaxed net, then $\net^{\dagger}$ is also $(\eps, \eta)$-relaxed.
\end{proposition}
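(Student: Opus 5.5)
The plan is to show that the inversion map $U \mapsto U^\dagger$ carries the coverage of $\net$ onto the coverage of $\net^\dagger$ and preserves Haar measure. The statement then follows by comparing the measures of the two exposed sets.

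\emph{Step 1: distances are preserved under inversion.} For unitaries $U, V$ we need
\[
\| U^\dagger(\cdot)U - V^\dagger(\cdot)V \|_\diamond = \| U(\cdot)U^\dagger - V(\cdot)V^\dagger \|_\diamond.
\]
We get this from the unitary invariance of the diamond distance, the same property used in Lemma~\ref{lem:netcompose}. Pre-composing both channels with $U(\cdot)U^\dagger$ and post-composing with $V(\cdot)V^\dagger$ leaves the diamond norm of the difference unchanged, because composition with unitary channels (tensored with identity) is an isometry of the trace norm. The left-hand side thereby becomes
\[
\| V(\cdot)V^\dagger - U(\cdot)U^\dagger \|_\diamond .
\]
Pre-composition must be applied as $X \mapsto U X U^\dagger$ on the input and post-composition as $Y \mapsto V Y V^\dagger$ on the output, each tensored with $\id_m$. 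Written this way, the first channel reduces to $V(\cdot)V^\dagger$ and the second to $U(\cdot)U^\dagger$.

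\emph{Step 2: coverage is mapped to coverage.} Step 1 gives
\[
U \in \cvg_\eps(\net) \iff \exists\, V\in\net:\ \|U^\dagger(\cdot)U - V^\dagger(\cdot)V\|_\diamond\le\eps \iff U^\dagger \in \cvg_\eps(\net^\dagger).
\]
Hence $\cvg_\eps(\net^\dagger) = \{U^\dagger : U \in \cvg_\eps(\net)\}$, and the same holds for the complements.

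\emph{Step 3: inversion preserves Haar measure.} The Haar measure on the compact group $\unitary(d)$ is invariant under $U\mapsto U^\dagger$. The pushforward under inversion is again a left- and right-invariant probability measure, so by uniqueness it equals $\hr$. Therefore $\xps_\eps(\net^\dagger) = \Pr_{U\sim\hr}[U^\dagger \notin \cvg_\eps(\net^\dagger)] = \xps_\eps(\net) \le \eta$.

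The only step that needs any care is the isometry argument in Step 1, specifically keeping track of which side each unitary channel is composed on. Everything else is immediate from the definitions.
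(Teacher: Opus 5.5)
Your proposal is correct and follows the paper's proof. The paper likewise combines the inversion symmetry of the diamond distance between unitary channels with the inversion invariance of the Haar measure. The only difference is that the paper simply asserts the inversion symmetry, whereas you verify it by pre-composing with $U(\cdot)U^\dagger$ and post-composing with $V(\cdot)V^\dagger$.
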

\begin{proof}
	Consider the following derivation
	\begin{align*}
		\Pr_{U \sim \hr}[U \not\in \cvg_\eps(\net^\dagger)] & = \Pr_{U \sim \hr}[U^\dagger \not\in \cvg_\eps(\net)]\\
		& = \Pr_{U \sim \hr}[U \not\in \cvg_\eps(\net)]
	\end{align*}
	where the first equality follows from the inverse-symmetry of the diamond norm, and the second equality follows from the symmetry of the Haar measure.
\end{proof}

We get the following corollaries as a direct implication of Lemma~\ref{lem:netcompose}.
\begin{corollary}
	\label{cor:relaxed-net-to-net}
	Let $\net$ denote a $(\eps,\eta)$-net, $\eta < 1/2$. Then $\net^2$ and $\net \net^{\dagger}$ are $2\eps$-nets.
\end{corollary}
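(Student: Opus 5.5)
The corollary follows from Lemma~\ref{lem:netcompose} once we have the right pair of relaxed nets, and the proposition just before it says $\net^\dagger$ has the same exposure as $\net$.

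For $\net\net^{\dagger}$, we apply Lemma~\ref{lem:netcompose} with $\net_1=\net_2=\net$ and $\eps_1=\eps_2=\eps$. Both exposures equal $\eta$, so $\eta_1+\eta_2=2\eta<1$ because $\eta<1/2$. The lemma then says $\net_1\cdot\net_2^\dagger=\net\net^\dagger$ is an $(\eps+\eps)$-net, that is, a $2\eps$-net in the usual sense, since its exposure is zero.

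For $\net^2=\net\cdot\net$, we apply Lemma~\ref{lem:netcompose} with $\net_1=\net$ and $\net_2=\net^\dagger$. By the preceding proposition, $\net^\dagger$ is an $(\eps,\eta)$-relaxed net, so again $\eta_1+\eta_2=2\eta<1$. We also have $\net_2^\dagger=(\net^\dagger)^\dagger=\net$, which holds directly from the definition $\net^\dagger=\{V:V^\dagger\in\net\}$. The lemma therefore gives that $\net\cdot\net=\net^2$ is a $2\eps$-net.

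There is no real obstacle. The only points to check are the identity $(\net^\dagger)^\dagger=\net$ and that the strict inequality $\eta<1/2$ delivers the strict condition $\eta_1+\eta_2<1$ that Lemma~\ref{lem:netcompose} needs.
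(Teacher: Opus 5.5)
Your argument is correct and takes the same route as the paper: apply Lemma~\ref{lem:netcompose} with $\net_1=\net_2=\net$ to get $\net\net^{\dagger}$, and with $\net_1=\net$, $\net_2=\net^{\dagger}$ to get $\net^2$, where $\net^{\dagger}$ is $(\eps,\eta)$-relaxed by the preceding proposition, $(\net^{\dagger})^{\dagger}=\net$, and $2\eta<1$ supplies the lemma's hypothesis. One cosmetic point: the exposures are at most $\eta$ rather than equal to it, and the lemma already shows that every unitary is $2\eps$-covered, so you do not need to argue from ``exposure zero'' to obtain a net in the usual sense.
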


\subsection{Comparison with Prior Work} \label{sec:designscompare}

The quantitative connection between unitary designs and nets was first studied in~\cite{oszmaniec2021epsilon}, and then improved upon by~\cite{S_owik_2025}. The main result of~\cite{S_owik_2025} is the following:

\begin{theorem}[TPE implies nets~\cite{oszmaniec2021epsilon,S_owik_2025}]
\label{thm:slowik-tpe-implies-nets}
Let $\nu$ be a $(t,\lambda)$-TPE with
\[
    t \gtrsim \frac{d^{5/2}}{\eps} \, , \qquad \qquad \lambda \lesssim \Big( \frac{\eps}{d^{1/2}} \Big)^{d^2 - 1}
\]
then $\support(\nu)$ is an $\eps$-net. Here, ``$\gtrsim$'' and ``$\lesssim$'' hide log-factors in $d$ and $\eps^{-1}$ as well as constants.
\end{theorem}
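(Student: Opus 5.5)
The approach is a polynomial-kernel (Fourier) argument on $\unitary(d)$, in the spirit of~\cite{oszmaniec2021epsilon}. We argue by contradiction: suppose some $U\in\unitary(d)$ is $\eps$-far in diamond distance from every element of $\support(\nu)$. We build a test function $K_U(V)=K(U^\dagger V)$ that is a balanced polynomial of degree at most $t$ in the entries of $V$ and $\bar V$. It should satisfy three properties: (i) $K\ge 0$ everywhere; (ii) $K$ is invariant under global phase and conjugation, so it is a class function of the channel $V(\cdot)V^\dagger$; (iii) $K$ is normalized so that $\E_{V\sim\hr}[K_U(V)]=1$, while $K$ is tiny outside the diamond $\eps$-ball around the identity channel. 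Because every point of $\support(\nu)$ lies outside the ball around $U$, we get $\E_{V\sim\nu}[K_U(V)]\le \sup_{\text{far}} K \ll 1$. Hence $|\E_\nu K_U-\E_{\hr}K_U|$ is close to $1$.

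Next we bound this same difference using the TPE property. Since $K_U$ has degree $\le t$, it can be written as $K_U(V)=\Tr[A_U\, V^{\otimes t}\otimes \bar V^{\otimes t}]$ for some operator $A_U$. Padding the lower-degree parts with copies of $V\otimes\bar V$ (using $\Tr$-normalizations) makes all terms homogeneous of the same degree. The difference of expectations then equals $\Tr[A_U(\Phi'_{\nu,t}-\Phi'_{\hr,t})]$, where $\Phi'$ is the moment operator in its natural representation. Cauchy–Schwarz gives the bound $\lambda\cdot\|A_U\|_2$. By Peter–Weyl, the minimal $\|A_U\|_2$ is controlled by the $L^2(\hr)$ norm of $K_U$, up to dimension factors of the irreps involved. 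For a nonnegative kernel with unit mean concentrated on a ball of Haar volume $\approx (\eps/c_\diamond)^{d^2-1}$ (Fact 2 above), this norm is about $\vol(\text{ball})^{-1/2}$. Absorbing $\poly(d)$ factors and the conversion between operator norm and diamond norm gives the scale $(\sqrt d/\eps)^{O(d^2)}$. Therefore, if $\lambda\lesssim(\eps/\sqrt d)^{d^2-1}$, the difference is at most $1/2$, which is a contradiction.

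The main obstacle is constructing the kernel with the degree requirement $t\gtrsim d^{5/2}/\eps$. My first attempt would be $K(V)\propto |\Tr V|^{2t}$. It is nonnegative and phase-invariant, but it only decays like $(1-\Theta(\eps^2/d))^{2t}$ outside the ball, which forces $t\sim d^2/\eps^2$. That is too weak in $\eps$ dependence, so the plan is to use it only as a fallback. Instead, I would take a product over eigenphases $\theta_j$ of $V$ (after factoring out the global phase) of a one-dimensional Fejér/Jackson kernel of degree $\approx t/d$, concentrated at $0$ with width $\approx d/t$. Such a product is a symmetric trigonometric polynomial in the eigenvalues, so it is a polynomial in $V,\bar V$ of total degree $\le t$. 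With per-eigenvalue width $\eps/d^{3/2}$ (so that the diamond distance, which is of order $\max_j|\theta_j|$ times factors, stays below $\eps$), this requires $t\approx d\cdot d^{3/2}/\eps=d^{5/2}/\eps$. I would verify this decay carefully and bound the tail via the Weyl integration formula.

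Finally, I would check the $L^2$ estimate of the kernel against the Vandermonde density, again by the Weyl integration formula, and collect the logarithmic factors hidden by $\gtrsim$ and $\lesssim$.
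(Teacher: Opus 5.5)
The paper does not prove this theorem. It imports it from \cite{oszmaniec2021epsilon,S_owik_2025}, and its own tomography route (Lemma~\ref{lem:designs-to-nets} plus $\delta\le d^t\lambda$) only yields the incomparable corollary with $\lambda\le d^{-t}\delta$ and a relaxed net. So your proposal has to stand on its own. Your framework is sound: a nonnegative, phase-invariant kernel of degree $\le t$ with Haar mean one, small off the $\eps$-ball, whose nontrivial Fourier part the TPE controls. However, property (iii) for your eigenphase-product kernel fails at degree $d^{5/2}/\eps$.

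\textbf{Why the product kernel fails.} You are missing the effect of normalization. If $K\ge 0$, $\E_{\hr}K=1$ and $K\le\tfrac12$ off the ball $B_\eps$, then $\max K\ge 1/(2\hr(B_\eps))\ge\tfrac12(c_\diamond/\eps)^{d^2-1}$; this is the direction in which Fact~2 actually enters. Hence $K$ must drop by a factor $\eps^{\Omega(d^2)}$ off the ball. Now take the phase-averaged product of one-dimensional kernels $F$ of degree $n$. The map $\theta\mapsto K(\mathrm{diag}(1,\dots,1,e^{i\theta}))$ is a nonnegative trigonometric polynomial of degree $\le n$, maximized at $\theta=0$ by H\"older. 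Every point with $|\theta|\ge 3\eps$ is $\eps$-far from the identity channel. By the Remez inequality this polynomial cannot stay below $e^{-O(n\eps)}$ times its maximum on that whole arc. So (iii) forces $n\eps\gtrsim d^2\log(1/\eps)$, and with your count $t\approx dn$ this gives $t\gtrsim d^3\log(1/\eps)/\eps$. In particular, your choice $n\approx d^{3/2}/\eps$ makes (iii) impossible.

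Your justification for the width $\eps/d^{3/2}$ is also wrong. The diamond distance of $V$ to the identity channel is at most $2\min_\phi\max_j|e^{i\theta_j}-e^{i\phi}|$, with no $d$-dependent loss. The real constraint is the Weyl/Vandermonde suppression of Haar measure near the identity, which the width heuristic ignores. The same oversight affects your fallback. After normalization, $|\Tr V|^{2t}$ peaks at $(Ct/d^2)^{\Theta(d^2)}$, while its off-ball ratio is only $e^{-\Theta(t\eps^2/d)}$. So it needs $t\gtrsim d^3\log(d/\eps)/\eps^2$, not $d^2/\eps^2$.

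\textbf{How to repair the kernel.} Your fallback variable is the right one; what is missing is Chebyshev acceleration. Put $x=|\Tr V|^2/d^2$ and $\delta=\eps^2/(8d)$. Diamond distance $>\eps$ forces Hilbert--Schmidt distance modulo phase $>\eps/2$, i.e.\ $|\Tr V|<d-\eps^2/8$, hence $x\le 1-\delta$. Let $P(x)=T_m\big(\tfrac{2x}{1-\delta}-1\big)^2$, where $T_m$ is the Chebyshev polynomial.
\begin{itemize}
\item Off the ball, $P\in[0,1]$.
\item On $\{x\ge 1-\delta/2\}$, $P\ge\tfrac14 e^{2m\sqrt\delta}$, and this set contains a Hilbert--Schmidt ball of radius $\eps/4$, of Haar measure $(c\eps/\sqrt d)^{d^2}$.
\end{itemize}
So $K=P/\E_{\hr}P$ satisfies (i)--(iii) once $m\gtrsim d^{5/2}\log(d/\eps)/\eps$, at degree $t=2m$. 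This is where $d^{5/2}/\eps$ comes from.

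\textbf{How to repair the TPE step.} The quantity $|\Tr[A_U(\Phi'_{\nu,t}-\Phi'_{\hr,t})]|$ is bounded by $\lambda\|A_U\|_1$, not $\lambda\|A_U\|_2$. The irrep-dimension factors are $e^{\Theta(d^2\log(t/d^2))}$, not $\poly(d)$. Also, an upper bound on a ball's volume cannot bound $\|K\|_2$ from above. The clean version is as follows. Write $K=\sum_\mu a_\mu\chi_\mu$. Nonnegativity and unit mean give $|a_\mu|\le d_\mu$, so $|\E_\nu K_U-\E_{\hr}K_U|\le\lambda\sum_{\deg\mu\le t}d_\mu^2$. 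This sum is the dimension of the space of degree-$\le t$ balanced polynomials on $\unitary(d)$. It is of order $(Ct/d^2)^{d^2-1}$ up to lower-order factors. At $t\approx d^{5/2}/\eps$, this is exactly the threshold $\lambda\lesssim(\eps/\sqrt d)^{d^2-1}$.
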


We point out several important differences. First, the hypothesis of Theorem~\ref{thm:slowik-tpe-implies-nets} assumes a tensor-product expander (TPE), rather than a diamond-design as in our result Lemma~\ref{lem:designs-to-nets}. Using Theorem~\ref{thm:slowik-tpe-implies-nets} to relate diamond designs to nets would yield the following: a symmetric\footnote{We recall that a distribution $\nu$ is symmetric if $\nu = \nu^\dagger$.} $(t,\delta)$-diamond-design yields a (relaxed) $\eps$-net only if $t \gtrsim d^{5/2} \eps^{-1}$ and $\delta \lesssim (\eps d^{-1/2})^{d^2 - 1}$, which is (in some ways) weaker than our Lemma~\ref{lem:designs-to-nets}, which handles any $\delta$ and requires $t \gtrsim d^2 \eps^{-2}$ (i.e., a better dependence on $d$, but a worse dependence on $\eps$).

Second, our result relating diamond-designs to nets can be used to give a relatively simple proof of a statement very similar to Theorem~\ref{thm:slowik-tpe-implies-nets}.
\begin{corollary}
Let $\udesign$ be a $(t,\lambda)$-TPE with
\[
    t \gtrsim \frac{d^2}{\eps^2} \ln \frac{1}{\eta_0} \, , \qquad \qquad \lambda \leq d^{-t} \delta
\]
for some $\eta_0,\delta > 0$, then $\support(\udesign)$ is a $(\eps,\eta)$-net for $\eta = \frac{\delta + \eta_0}{1 - \eta_0}$.
\end{corollary}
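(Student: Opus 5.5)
The plan is to reduce the corollary to Lemma~\ref{lem:designs-to-nets} through the norm conversions in Proposition~\ref{prop:norm-conversions}.

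First, convert the TPE hypothesis into a diamond-design guarantee. Let $\udesign$ be a $(t,\lambda)$-TPE. By item~3 of Proposition~\ref{prop:norm-conversions}, its diamond-design error satisfies $\delta' \le d^t \lambda$. The assumption $\lambda \le d^{-t}\delta$ then gives $\delta' \le \delta$, so $\udesign$ is a $(t,\delta)$-diamond-design.

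Second, apply Lemma~\ref{lem:designs-to-nets} in its diamond-design form. The hypothesis $t \gtrsim \frac{d^2}{\eps^2}\ln\frac{1}{\eta_0}$ is exactly the query count needed there. The lemma uses $t$ queries for the tomography procedure $T_{1,\eps/3,\eta_0,d}$ of Proposition~\ref{prop:channeltom}, and the constants hidden in $\gtrsim$ are chosen to absorb the factor $9$ coming from $\eps/3$. The lemma then shows that $\support(\udesign)$ is an $(\eps,\eta)$-net with $\eta = \frac{\delta+\eta_0}{1-\eta_0}$, which is the claim.

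One bookkeeping point needs checking. If $t$ exceeds the exact number of queries the tomography needs, the argument still goes through. A $(t,\delta)$-diamond-design is also a $(t',\delta)$-diamond-design for every $t' \le t$, because a non-adaptive $t'$-query distinguisher can be padded with dummy queries whose outputs are discarded (equivalently, by partial trace, which does not increase the diamond norm). The distinguisher in Lemma~\ref{lem:designs-to-nets} therefore fits inside the $t$-query budget.

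There is no substantive obstacle. The only care needed is in matching the constants in ``$\gtrsim$'' to the $\Theta(\cdot)$ in Proposition~\ref{prop:channeltom}, and in invoking the conversion in the correct direction, namely diamond distance at most $d^t$ times the $2\to2$ distance.
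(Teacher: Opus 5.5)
Your proposal is correct and follows the paper's proof: it converts the $(t,\lambda)$-TPE into a $(t,d^t\lambda)$-diamond-design using item~3 of Proposition~\ref{prop:norm-conversions}, then applies Lemma~\ref{lem:designs-to-nets}. Your extra remark on monotonicity in $t$ is valid. It handles the mismatch between the lemma's $t=\Theta(\cdot)$ and the corollary's $t \gtrsim \cdot$ by padding with discarded queries, equivalently a partial trace, and the paper leaves this step implicit.
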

\begin{proof}
    If $\udesign$ is a $(t,\lambda)$-TPE, then by Proposition~\ref{prop:norm-conversions}, it is also a $(t,d^t \lambda)$-diamond-design. The corollary statement follows directly from Lemma~\ref{lem:designs-to-nets}.
\end{proof}
In this corollary, the dependence on $d$ in the lower bound for $t$ is better, but the dependence on $\eps$ in the upper bound on $\lambda$ is worse ($\eps$ appears in the exponent of $d^{-t}$). The result is also a relaxed net rather than a bona fide net (although Corollary~\ref{cor:relaxed-net-to-net} shows how this can be used to build an $\eps$-net).

Finally, we observe that the techniques used to prove Lemma~\ref{lem:designs-to-nets} are very different from those used to prove Theorem~\ref{thm:slowik-tpe-implies-nets} in~\cite{oszmaniec2021epsilon,S_owik_2025}, which were highly analytical and geometrical. Our proof is arguably much more intuitive and operational (e.g., using channel tomography as part of the argument). It is an interesting question whether our techniques can be used to completely recover the statement of Theorem~\ref{thm:slowik-tpe-implies-nets}.

\section{ROM-PRUs and Scalability}
\label{sec:rompru}

A pseudorandom unitary (PRU) ensemble~\cite{JLS18} is a distribution over efficiently implementable unitaries that cannot be distinguished from Haar-random by a computationally efficient distinguisher that makes few queries to the unitary. The security of PRUs (as defined) requires a cryptographic hardness assumption, such as the existence of post-quantum one-way functions.

All prior PRUs with security based on standard cryptographic assumptions are analyzed in two steps: first, the PRU is implemented as an algorithm that has access to a random Boolean function, and shown to be information-theoretically secure against a distinguisher that makes few queries. Notably, the distinguisher is not required to be computationally efficient. Then, the PRU implementation is made efficient by replacing the random Boolean function with the output of a post-quantum pseudorandom function (PRF), and consequently the PRU is secure against computationally bounded distinguishers.

We formalize the object constructed in the first part of these analyses; we call it a \emph{ROM-PRU}.

\begin{definition}[ROM-PRU]
\label{def:rompru}
    A family of $d$-dimensional unitaries $\mathscr{U} = \{ U_{f} \}_{f}$ indexed by boolean functions $f:\{0,1\}^{m} \to \{0,1\}$ is a ROM-PRU with
    \begin{itemize}
        \item Dimension $d$,
        \item Construction query complexity $q$,
        \item ROM input length $m$,
        \item Implementation error $\alpha$,
        \item Security against $t$-query adversaries with distinguishing advantage $\delta$,
    \end{itemize}
    if and only if
    \begin{enumerate}
        \item \emph{(Efficient computability)}. There exists a $q$-query algorithm $\cA$ such that for all $f:\{0,1\}^{m} \to \{0,1\}$,
    \[
        \| \cA^f(\cdot) - U_{f}\, (\cdot) \, U_{f}^\dagger \|_\diamond \leq \alpha~.
    \]
        \item \emph{(Pseudorandomness)}. For all $t$-query distinguishers $\cD$,
    \[
        \Big | \Pr_f[\cD^{U_{f}} = 1] - \Pr_{U \sim \hr(d)} [\cD^U = 1] \Big | \leq \delta~.
    \]
    Here, the first probability is over a uniformly random function $f:\{0,1\}^{m} \to \{0,1\}$.
    \end{enumerate}
\end{definition}

We compare this with the typical definition of PRU from~\cite{JLS18,MPSY24,MH25} and note several key differences.
\begin{enumerate}
    \item The algorithms $\cA$ (implementing the PRU) and $\cD$ (distinguishing the PRU from Haar-random) are only required to be query-bounded, not computationally bounded.

    \item The algorithm $\cA$ implementing the PRU queries a random boolean function $f:\{0,1\}^{m} \to \{0,1\}$, which can be thought of as analogous to the ``key'' in the standard definition of PRU from~\cite{JLS18}.

    \item The PRUs in~\cite{MPSY24,chen2024efficient,MH25,LQSYZ25,SMLBH25} are assumed to be implemented by quantum algorithms with zero error, whereas in our definition of ROM-PRU we allow for some implementation error.

\end{enumerate}

\begin{remark}
    In our definition of ROM-PRU,
the distinguisher $\cD$ only has query access to the unitary $U_{f}$, and not directly to the random function $f$. This is slightly different from the standard use of the ROM terminology in cryptography. In the context of PRUs, this seemingly weaker security definition is sufficient when the goal is to instantiate the random oracle $f$ using a PRF.
\end{remark}

\begin{remark}
    The reader may wonder about the purpose of the implementation error parameter $\alpha$; after all, couldn't it be folded into the distinguishing advantage? This parameter is used to capture cases in which the implementing algorithm $\cA$ does not implement a unitary exactly, but rather some channel (for example, it might weakly entangle the input with some ancillas). %
\end{remark}

Note that in our definition, a ROM-PRU is a finite set of unitaries indexed by Boolean functions, all of the same dimension. We can consider an asymptotic family of ROM-PRUs, indexed both by dimension and by a security parameter, and define what it means for such a family to be scalable.

\begin{definition}[Scalable ROM-PRU]
\label{def:scalable-rom-pru}
A family $\{ \mathscr{U}_{d,\secp} \}_{d,\secp \in \N}$ of uniformly-generated ROM-PRUs is a \emph{scalable ROM-PRU family} if each $\mathscr{U}_{d,\secp} = \{ U_{d,\secp,f} \}_{f}$ is a $d$-dimensional ROM-PRU with
\begin{enumerate}
    \item Construction query complexity and ROM input length $\poly(\log d,\secp)$,
    \item Implementation error $2^{-\secp}$,
    \item Security against $2^{\secp}$-query adversaries with distinguishing advantage at most $2^{-\secp}$.
\end{enumerate}
\end{definition}

Here, ``uniformly-generated'' means that there is a single uniform quantum algorithm $\cA$ that, when it gets parameters $d,\secp$ as input, implements the ROM-PRU $\mathscr{U}_{d,\secp}$. Informally, a family of ROM-PRUs is scalable if the security (the adversary query budget and the distinguishing advantage) can be tuned independently of the dimension, and furthermore the PRU is efficient (i.e., polynomial in $\log d$ and $\secp$) to implement.

\subsection{Prior constructions of ROM-PRUs}

We summarize known constructions of ROM-PRUs.

\paragraph{Cryptographically-secure PRUs.} A recent line of works established the existence of \emph{cryptographically-secure} PRUs (i.e., PRUs in the plain model, without a random oracle)~\cite{MPSY24,chen2024efficient,MH25,LQSYZ25,SMLBH25}. Peering into the construction and analysis of these PRUs, however, reveals that they all (essentially) constructed ROM-PRUs first\footnote{In certain cases, such as in the CPFC construction of~\cite{MH25}, the random oracle might have higher-arity outputs (rather than bits) or be a permutation over a large alphabet; we still call these ROM-PRUs for convenience.}, and then instantiated the random oracle with a post-quantum secure PRF.

All of these ROM-PRU constructions have the following parameter settings:
\begin{enumerate}
    \item Construction query complexity and ROM input length $\poly(\log d)$.
    \item No implementation error.
    \item Security against $t$-query adversaries with distinguishing advantage $O(t^c/d)$ for some constant $c \geq 2$.
\end{enumerate}

While the construction is efficient in terms of query complexity and ROM input length, note that the number of queries tolerable by the adversary is at most $d^{1/c} \leq d^{1/2}$. Therefore these constructions are not proved to be scalable; in fact, we show in Section~\ref{sec:entropylb} that these constructions are inherently not scalable.

Is there \emph{any} construction of a scalable ROM-PRU, for any parameter regime? We next present what we call a ``trivial'' construction of a scalable ROM-PRU.

\paragraph{A trivial scalable ROM-PRU.} This trivial construction relies on the existence of, for every dimension $d$ and integer $t$, an \emph{exact} unitary $t$-design $\nu$ (which is both a diamond- and relative-error design) with support size
\[
    |\support(\nu)| = \binom{d^2 + t -1}{d^2 - 1}^2~.
\]
This follows from~\cite[Theorem 11]{roy2009unitary}. In what follows, we will set $t = 2^\secp$. The second item of Proposition~\ref{prop:designdistinguish} implies that $\nu$ is completely indistinguishable from the Haar measure by any quantum algorithm making $t$ queries (even adaptive ones).

We now describe the construction. Here, the random oracle is used as a source of randomness to specify a unitary from $\support(\nu) = \{ U_j \}_j$ (we assume there is a canonical ordering of the unitaries), and the construction algorithm $\cA$ queries the random oracle bit-by-bit to obtain the index $j$, requiring $\log |\support(\nu)|$ queries (one query for each bit of the index). The ROM input length is thus $\log \log |\support(\nu)|$ (the length needed to represent an index into the binary representation of $j$). Once the index $j$ is known, the construction algorithm $\cA$ can directly implement $U_j$, which we assume has been precomputed beforehand (we do not track computational complexity in this setting).

The implementation error is $0$, and the security is against adversaries making $2^\secp$ queries; these cannot distinguish $\nu$ from Haar at all.

This yields a ROM-PRU family (indexed by dimension and the security parameter) with implementation error $0$, security against $2^{\secp}$-query adversaries, with distinguishing advantage $0$. Therefore it would constitute a scalable ROM-PRU family according to the definition, except that the construction query complexity is
\[
    \log |\support(\nu)| = 2 \log \binom{d^2 + t -1}{d^2 - 1} \leq 2 (d^2 - 1)\log \frac{e(d^2 + t - 1)}{d^2 - 1}~.
\]
For $t = 2^\secp$ this is at most $O(d^2 \cdot \secp) = \poly(d,\secp)$, rather than $\poly(\log d,\secp)$.

This leaves open the question of whether there is an efficient construction of a scalable ROM-PRU in terms of the dimension dependence -- if not polynomial in $\log d$, then at least with better than $d^2$ dependence.

\begin{question}
    Is there a scalable ROM-PRU where the construction query complexity $q$ and the ROM input length $m$ satisfy
    \[
        qm \ll d^c \cdot \poly(\log d) \cdot \poly(\secp)
    \]
\end{question}

\subsection{ROM-PRUs, Unitary Designs, and the Unitary Synthesis Problem}
\label{sec:rom-prus-and-designs}

In this section we relate ROM-PRUs with approximate unitary designs, and the unitary synthesis problem of Aaronson and Kuperberg~\cite{aaronson2007quantum}.

First, ROM-PRUs are diamond designs.
\begin{proposition}[ROM-PRUs are diamond designs]
\label{prop:from-rom-pru-to-approx-design}
Let $\mathscr{U} = \{U_f \}_f$ denote a ROM-PRU with security against $t$-query adversaries with distinguishing advantage $\delta$. Then $\mathscr{U}$ is also a $(t,\delta)$-diamond-design.
\end{proposition}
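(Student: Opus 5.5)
The plan is to reduce directly to the operational characterization of diamond designs in the first item of Proposition~\ref{prop:designdistinguish}. Let $\udesign$ be the distribution on $\unitary(d)$ obtained by drawing a uniformly random $f:\binset^m \to \binset$ and outputting $U_f$; this is what ``$\mathscr{U}$ is a diamond design'' should mean. By that proposition, $\udesign$ is a $(t,\delta)$-diamond-design if and only if every non-adaptive $t$-query algorithm $\cA$ with $\binset$ output satisfies $\abs{\E_{U\sim\udesign}[\cA^U]-\E_{U\sim\hr}[\cA^U]}\le\delta$.

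So I will take an arbitrary non-adaptive $t$-query algorithm $\cA$. Such an algorithm prepares a state on $t$ copies of $\C^d$ together with an arbitrary ancilla, applies $U^{\otimes t}\otimes\id$, and measures with a two-outcome POVM. In particular it is a $t$-query distinguisher $\cD$ in the sense of Definition~\ref{def:rompru}. The pseudorandomness condition there bounds $\abs{\Pr_f[\cD^{U_f}=1]-\Pr_{U\sim\hr}[\cD^U=1]}$ by $\delta$, and $\Pr_f[\cD^{U_f}=1]=\E_{U\sim\udesign}[\cA^U]$ by definition of $\udesign$. This gives the required inequality, and the first item of Proposition~\ref{prop:designdistinguish} then yields $\|\Phi_{\udesign,t}-\Phi_{\hr,t}\|_\diamond\le\delta$.

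As a sanity check, I could also argue directly from the definition of the diamond norm. Since $\Phi_{\udesign,t}-\Phi_{\hr,t}$ is a difference of channels, its diamond norm equals $2\max$ over inputs $\rho$ and projectors $\Pi$ of the bias. The optimal Helstrom measurement on $(\Phi\otimes\id_m)(\rho)$ is exactly such a non-adaptive distinguisher.

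The only point requiring care is matching conventions. One issue is the factor of $2$ between trace distance and diamond norm, which the paper's Proposition~\ref{prop:designdistinguish} already absorbs, so I will simply defer to it. The other is that the ROM-PRU adversary class contains the non-adaptive parallel-query algorithms; I expect this to be immediate, since Definition~\ref{def:rompru} quantifies over all $t$-query distinguishers. I do not anticipate a real obstacle. Note that the implementation error $\alpha$ plays no role, since pseudorandomness is stated for the ideal unitaries $U_f$ themselves.
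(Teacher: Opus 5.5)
Your proposal is correct and is essentially the paper's proof: the paper likewise notes that every non-adaptive $t$-query algorithm is a $t$-query distinguisher in the sense of Definition~\ref{def:rompru}, and then invokes the first item of Proposition~\ref{prop:designdistinguish}. You are right that the standard diamond norm carries a factor of $2$ relative to the distinguishing bias, and deferring to Proposition~\ref{prop:designdistinguish} as stated, which absorbs that factor, is exactly what the paper does.
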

\begin{proof}
	This follows by Definition~\ref{def:rompru} (definition of ROM-PRU) and Proposition~\ref{prop:designdistinguish} (which relates the definition of diamond design to distinguishability by query algorithms).
\end{proof}

Next, we show that an efficient construction of a \emph{scalable} ROM-PRU would imply a positive solution to the unitary synthesis problem. This is a question of whether there is an efficient quantum oracle algorithm that can implement any unitary, given access to a classical oracle that depends on the unitary.
\begin{definition}[Unitary synthesis]\label{def:usyn}
	We say a query algorithm $\cA$ \emph{solves the unitary synthesis problem} for dimension-$d$ unitaries with query complexity $t$, oracle input length $m$, and error $\eps$ if for all unitaries $U \in \unitary(d)$ there exists a boolean function $f: \{0,1\}^{m} \to \{0,1\}$ such that $\cA$ makes $t$ queries to $f$ and implements $U$ up to diamond distance error $\eps$.
\end{definition}

Aaronson and Kuperberg~\cite{aaronson2007quantum} asked the following question:
\begin{question}[Unitary synthesis problem]
Is there an efficient quantum algorithm that solves the unitary synthesis problem with query complexity $t$, oracle input length $m$, and error $1/2$ satisfying $tm \leq \poly(\log d)$?
\end{question}
We note that there are two trivial quantum algorithms for solving unitary synthesis: one with query complexity $t \approx d^2$ and oracle input length $m \approx \log d$ (i.e., reading the classical description of the unitary bit-by-bit), and one with query complexity $t = 1$ and oracle input length $m \approx d^2$ (i.e., using the Bernstein-Vazirani trick as described in~\cite{LMW24}). The best algorithm for unitary synthesis known so far is due to Rosenthal~\cite{R21}, who presented an algorithm with query complexity $t \approx \sqrt{d}$ and oracle input length $m \approx \poly(\log d)$.

Our next observation is that for a sufficiently small $\delta$, a scalable ROM-PRU would imply a solution to the unitary synthesis problem per Definition~\ref{def:usyn}.

\begin{lemma}
    Suppose there exists a scalable ROM-PRU family. Then there exists an algorithm $\cB$ that solves the unitary synthesis problem with error $\frac{1}{2}$ and query complexity and oracle input length $\poly(\log d)$.
\end{lemma}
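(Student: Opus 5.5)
Fix the dimension $d$ and choose the security parameter $\secp = \Theta(\log d)$ large enough that $2^{\secp} \ge t$, where $t = \Theta(\frac{d^2}{\eps^2}\ln\frac{1}{\eta_0})$ is the non-adaptive tomography query count of Proposition~\ref{prop:channeltom}. We use $\eps = 1/8$ and $\eta_0 = 1/6$, so $t = \Theta(d^2)$ and $\secp = 2\log d + O(1)$ suffices. For large $d$ we also have $2^{-\secp} \le 1/6$.

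The scalable family then provides $\mathscr{U}_{d,\secp} = \{U_f\}_f$ with construction query complexity and ROM input length $\poly(\log d, \secp) = \poly(\log d)$, implementation error $\alpha = 2^{-\secp}$, and security against $t$-query adversaries with advantage $\delta \le 1/6$. Proposition~\ref{prop:from-rom-pru-to-approx-design} makes the uniform distribution over $\{U_f\}_f$ a $(t,\delta)$-diamond-design. Lemma~\ref{lem:designs-to-nets} then shows that $\net = \{U_f\}_f$ is an $(\eps,\eta)$-net with $\eta \le 2/5 < 1/2$. Corollary~\ref{cor:relaxed-net-to-net} upgrades this to a full $2\eps$-net, namely $\net^2 = \{U_{f_1}U_{f_2}\}$. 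We use $\net^2$ rather than $\net\net^\dagger$ because we may not have access to inverses of the construction algorithm.

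The algorithm $\cB$ takes an oracle $g:\{0,1\}^{m+1}\to\{0,1\}$ and defines $f_b(x) = g(b,x)$. It runs $\cA^{f_2}$ and then $\cA^{f_1}$, where $\cA$ is the construction algorithm from Definition~\ref{def:rompru}. Its query complexity is $2q = \poly(\log d)$ and its oracle input length is $m+1 = \poly(\log d)$. Given a target $U$, the net property gives $f_1, f_2$ with $\|U_{f_1}U_{f_2}(\cdot)(U_{f_1}U_{f_2})^\dagger - U(\cdot)U^\dagger\|_\diamond \le 2\eps$, and we set $g$ accordingly. The diamond norm is subadditive under composition of channels (triangle inequality plus contractivity), so the implementation error is at most $2\alpha + 2\eps \le 2\cdot 2^{-\secp} + 1/4 \le 1/2$.

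The main obstacle is parameter bookkeeping. The distinguisher in Lemma~\ref{lem:designs-to-nets} makes $t$ non-adaptive queries, and this is covered by ROM-PRU security against arbitrary $t$-query distinguishers. We also need $\secp = O(\log d)$, so that $\poly(\log d,\secp)$ stays $\poly(\log d)$. A smaller point is that the support of the design may have repeated unitaries or include only the realized $U_f$; this does not affect the net argument, since the support is exactly $\{U_f\}_f$. Finally, uniformity of the family ensures $\cB$ is a single algorithm as $d$ varies.
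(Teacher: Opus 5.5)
Your proof is correct and follows the paper's argument. The chain is: ROM-PRU $\Rightarrow$ diamond design $\Rightarrow$ relaxed net via Lemma~\ref{lem:designs-to-nets} $\Rightarrow$ full $2\eps$-net $\net^2$ via Corollary~\ref{cor:relaxed-net-to-net}, after which the construction algorithm is run twice with a switch bit. The only difference is bookkeeping: the paper takes $\secp = c\log d$ for a large constant $c$, so $\eps = \sqrt{Cd^2/d^c} \to 0$ (which even gives error $d^{-a}$), whereas you fix $\eps = 1/8$ and $\secp = 2\log d + O(1)$, which suffices for error $1/2$.
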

\begin{proof}
Let $\{ \mathscr{U}_{d,\secp} \}_{d,\secp}$ be the hypothesized scalable ROM-PRU family. Fix a dimension $d$ and set $\secp = c \log d$ for a large constant $c \geq 2$ to be set later. By definition, $\mathscr{U}_{d,\secp} = \{ U_{d,\secp,f} \}_f$ is a ROM-PRU with construction query complexity $q=\poly(\log d)$ and ROM input length $m=\poly(\log d)$, implementation error at most $d^{-c}$, and security against $d^c$-query adversaries with distinguishing advantage at most $d^{-c}$.

Let $\nu$ denote the uniform distribution over $\mathscr{U}_{d,\secp}$. By Proposition~\ref{prop:from-rom-pru-to-approx-design} it follows that $\nu$ is a $(d^c,d^{-c})$-diamond-design. By Lemma~\ref{lem:designs-to-nets} the set $\mathscr{U}_{d,\secp}$ is a $(\eps,\eta)$-net for some $\eta \leq 2/5$ and $\eps = \sqrt{C d^2/d^c}$ for some universal constant $C \geq 1$. Then by \Cref{cor:relaxed-net-to-net} $\mathscr{U}^2_{d,\secp}$ is a $2\eps$-net. Henceforth we omit $d,\secp$ subscripts for clarity.

    Let $V \in \unitary(d)$. Since $\mathscr{U}^2$ is a $2\eps$-net, there exist unitaries $U_{f_1}, U_{f_2} \in \mathscr{U}$ for Boolean functions $f_1,f_2: \{0,1\}^m \to \{0,1\}$ such that $V$ is $2\eps$-close, in diamond distance, to the product $U_{f_2} U_{f_1}$. Consider the following unitary synthesis algorithm $\cB$: it runs the construction algorithm $\cA$ with query access to $f_1$, and then runs $\cA$ again with query access to $f_2$ (we can combine $f_1,f_2$ into a single Boolean function with an extra input bit as a ``switch''). Each invocation of $\cA$ incurs implementation error at most $d^{-c}$, so by the triangle inequality $\cB$ synthesizes a unitary that is $(2\eps+2d^{-c})$-close to $V$. Furthermore, the query complexity is $\poly(\log d)$ and the oracle input length is $\poly(\log d)$.

    We can choose a constant $c \geq 2$ such that for sufficiently large $d$, we have $2\eps + 2d^{-c} \leq \frac{1}{2}$. In fact, by taking $c$ sufficiently large as a function of any fixed $a>0$, the same argument gives synthesis error at most $d^{-a}$, while keeping query complexity and oracle input length $\poly(\log d)$.
\end{proof}

\begin{remark}
    The previous lemma only requires ROM-PRUs that are secure against distinguishers making non-adaptive queries to $U$ (inverse queries $U^\dagger$ and adaptive queries are not needed).
\end{remark}

\begin{remark}
    The reader may note that the scalability property is much more than is necessary; the connection between ROM-PRUs and the unitary synthesis problem holds the moment the ROM-PRU is secure against adversaries making $t \gtrsim d^2$ queries, and the ROM-PRU construction query complexity and ROM input length translate directly to the unitary synthesis solver's query complexity and oracle input length.
\end{remark}

\section{Lower Bounds for Designs and ROM-PRUs}
\label{sec:entropylb}

In this section we present some lower bounds on approximate unitary designs and ROM-PRUs. The lower bounds on ROM-PRUs, in particular, place nontrivial constraints on the space of possible constructions of scalable ROM-PRUs.

We start in Section~\ref{sec:cardlbs} by stating cardinality lower bounds for $t$-designs from previous works, followed by our own contribution for the setting of $t \gg d^2$. Using these bounds, together with the volume bound on $\epsilon$-nets from Corollary~\ref{cor:net-size}, we derive in Section~\ref{sec:functionlb} a lower bound on the size of binary functions required to obtain families of designs and nets. Finally, in Section~\ref{sec:diagonallb}, we show that using diagonal matrices, which supposedly have infinite entropy, does not actually change the picture significantly.

\subsection{Cardinality Lower Bounds for Approximate Designs}
\label{sec:cardlbs}

In this section we prove new lower bounds for the support size of approximate $t$-designs. Prior work~\cite{brandao2016local,brandao2021models} proved the following lower bound: let $\nu$ be a $(t,\delta)$-diamond-design. Then
\begin{equation}
    \label{eq:prior-support-size-bounds}
    \abs{\support(\nu)} \ge \max \Big \{ (1-\delta) \binom{d+t-1}{t}^2  \, , \, \frac{1}{1 + \delta} \, \frac{d^{2t}}{t!} \Big \}~.
\end{equation}
The first part of the bound follows from~\cite[Lemma 26]{brandao2016local}, and the second part follows from~\cite[Lemma 5]{brandao2021models}.

We note that this bound is roughly optimal for $t \leq d$; from~\cite[Theorem 11]{roy2009unitary} we have that there exist \emph{exact} $t$-designs with cardinality at most $\binom{d^2 + t - 1}{t}^2$. Up to a polynomial this is close to the lower bound of $(1 - \delta) \binom{d + t -1}{t}^2$ in the regime that $t \leq d$. However when $t \gtrsim d$, omitting the dependence on $\delta$, this first bound is at most
\[
    \binom{d + t - 1}{t}^2 = \binom{d + t - 1}{d - 1}^2 \leq \Big( \frac{ e(d+t-1)}{d-1} \Big)^{2(d-1)} \leq \Big( \frac{ct}{d} \Big)^{2d}
\]
for some constant $c$. For the regime $d \lesssim t \lesssim d^2$, the first bound is superseded by the second bound, which is at least
\[
    \frac{d^{2t}}{t!} \approx \frac{1}{\sqrt{2 \pi t}} \Big( \frac{ ed^2}{t} \Big)^t
\]
where we used Stirling's approximation of $t!$. However, when $t \gg d^2$, this second bound becomes vacuous. Are there better support size bounds for the regime that $t \gg d^2$? We address this next.

\paragraph{Lower Bounds from Epsilon Net Arguments.} We leverage the connection between approximate $t$-designs and $\eps$-nets that we established in Section~\ref{sec:designstonets}. The approximate design lower bound then follows from well-known lower bounds on the sizes of $\eps$-nets.

\begin{lemma}[Lower Bounds for Approximate Designs for $t \gtrsim d^2$]
\label{lem:improved-support-size-bounds}
	There exists a constant $C>0$ such that the following holds.  Let $\nu$ be a $(t,\delta)$-diamond-design for some $0 \leq \delta < 1$. Then
    \[
    \abs{\support(\nu)} \ge \left( \frac{2 - 2\delta}{3 + \delta} \right) \cdot \left( \frac{Ct}{d^2} \cdot \frac{1}{\ln \frac{1}{(1 - \delta)/4}} \right)^{(d^2-1)/2}~.
    \]
\end{lemma}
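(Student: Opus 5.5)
Combine Lemma~\ref{lem:designs-to-nets} with the volume bound of Corollary~\ref{cor:net-size}. Let $\net = \support(\nu)$. Lemma~\ref{lem:designs-to-nets}, in the diamond-design setting, says the following: if $t = \Theta(\frac{d^2}{\eps^2}\ln\frac{1}{\eta_0})$, say $t \ge C_0 \frac{d^2}{\eps^2}\ln\frac{1}{\eta_0}$ for the universal constant $C_0$ of Proposition~\ref{prop:channeltom}, then $\net$ is an $(\eps,\eta)$-net with $\eta = \frac{\delta+\eta_0}{1-\eta_0}$. Corollary~\ref{cor:net-size} then gives $|\net| \ge (1-\eta)(c_\diamond/\eps)^{d^2-1}$. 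The task is to pick $\eta_0$ and $\eps$ so that both factors match the claimed expression.

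\textbf{Choice of $\eta_0$.} Take $\eta_0 = (1-\delta)/4$, which explains the $\ln\frac{1}{(1-\delta)/4}$ term. Then
\[
1-\eta = \frac{1-\eta_0-\delta-\eta_0}{1-\eta_0} = \frac{1-\delta-2\eta_0}{1-\eta_0} = \frac{(1-\delta)/2}{(3+\delta)/4} = \frac{2-2\delta}{3+\delta},
\]
which is exactly the prefactor. It is positive for $\delta<1$, so the bound is nontrivial.

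\textbf{Choice of $\eps$.} Set $\eps$ as small as Lemma~\ref{lem:designs-to-nets} allows for the given $t$, namely $\eps = \sqrt{C_0 d^2 \ln(1/\eta_0)/t}$. Substituting gives
\[
\Big(\frac{c_\diamond}{\eps}\Big)^{d^2-1} = \Big(\frac{c_\diamond^2\, t}{C_0\, d^2 \ln(1/\eta_0)}\Big)^{(d^2-1)/2},
\]
so the claim holds with $C = c_\diamond^2/C_0$.

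\textbf{Main obstacle: the constraint $\eps\in(0,1)$.} Lemma~\ref{lem:designs-to-nets} requires $\eps<1$, and this is the one place needing care. If $t$ is too small for this, i.e.\ $t < C_0 d^2\ln(1/\eta_0)$, then the base $\frac{Ct}{d^2\ln(1/\eta_0)} \le c_\diamond^2$. Shrinking $C$ if necessary so that $c_\diamond \le 1$, the right-hand side is then at most $1$, and the bound holds trivially since $|\support(\nu)|\ge 1$.

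The remaining technicality is that $t$ must be an integer meeting the $\Theta(\cdot)$ requirement of the tomography procedure. If $\nu$ is a $t$-design it is also a $t'$-design for every $t'\le t$, since a $t'$-query non-adaptive distinguisher is a special case of a $t$-query one under Proposition~\ref{prop:designdistinguish}. So we may apply the lemma with $t' = \lceil C_0 d^2\ln(1/\eta_0)/\eps^2\rceil \le t$ and absorb the rounding into $C$.
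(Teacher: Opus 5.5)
Your proposal is correct and follows the paper's proof. It sets $\eta_0=(1-\delta)/4$ so that $1-\eta=\frac{2-2\delta}{3+\delta}$, applies Lemma~\ref{lem:designs-to-nets} with $\eps=\sqrt{c\,d^2\ln(1/\eta_0)/t}$, and concludes with Corollary~\ref{cor:net-size}. Your extra handling of the regime $\eps\ge 1$ (where the bound is trivial once $c_\diamond\le 1$) and of monotonicity in $t$ fills in details the paper leaves implicit.
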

\begin{proof}
	Let $\nu$ be a $(t,\delta)$-diamond-design. Set $\eta_0 = (1 - \delta)/4$. Then
    \[
        \eta = \frac{\delta + \eta_0}{1 - \eta_0} = \frac{1 + 3\delta}{3+\delta}
    \]
    By Lemma~\ref{lem:designs-to-nets} there exists a constant $c$ such that $\net = \support(\nu)$ is a $(\epsilon, \eta)$-net with
    \[
        \eps = \sqrt{\frac{c d^2}{t} \ln \frac{1}{\eta_0}}~.
    \]
    The claimed support size bound then follows from Corollary~\ref{cor:net-size}, which gives a lower bound on the size of $(\eps,\eta)$-nets.
\end{proof}

Note that for the regime $t \gtrsim d^2$, with $\delta$ bounded away from $1$, Lemma~\ref{lem:improved-support-size-bounds} gives bounds of size at least $e^{\Omega(d^2)}$, which are better than the bounds~\eqref{eq:prior-support-size-bounds} given in prior work.

\subsection{Entropy Bounds on Constructions with a Classical Oracle}\label{sec:functionlb}

Consider a family of unitaries $\mathscr{U} = \{ U_f \}_f \subseteq \unitary(d)$ where $f: \binset^m \to \binset$ is a classical function.
Since such a family has cardinality at most $2^{2^m}$ we can derive the following conclusions.

\begin{remark}\label{rmk:manyf}
	Without loss of generality, the above also has implications if instead of a single binary function $f$, $U$ has access to functions $g_1, \ldots, g_k$, where $g_i: \binset^m \to \binset^{\ell_i}$. This is equivalent to a single function $f: \binset^{m+\lceil\log \ell\rceil} \to \binset$, where $\ell = \sum_i \ell_i$. This is because $f$ can use $\lceil\log\ell\rceil$ bits to indicate which output bit (of which function $g_i$) it wishes to access.
\end{remark}

We derive the following corollaries by plugging in our cardinality bounds for designs and nets, suppressing lower-order additive constants. The logarithmic expressions below are interpreted in the parameter regimes where they are well-defined.

\begin{corollary}
If $\mathscr{U}$ contains the support of some $(t,\delta)$-diamond-design then
\begin{align}
m \ge \log(t) +  \loglog (d^2 / t) - O(1)~,
\end{align}
and also (assuming $1-\delta=\Omega(1)$):
\begin{align}
	m \ge 2\log(d) + \loglog (t/ d^2) - O(1)~.
\end{align}
\end{corollary}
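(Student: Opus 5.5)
The plan is a counting argument. Since every element of $\mathscr{U}$ is indexed by some $f:\binset^m\to\binset$, we have $|\mathscr{U}| \le 2^{2^m}$. If $\mathscr{U}$ contains $\support(\nu)$ for a $(t,\delta)$-diamond-design $\nu$, then $2^{2^m} \ge |\support(\nu)|$, that is,
\[
m \ge \log\log |\support(\nu)|.
\]
We then substitute each available support lower bound and take $\log\log$, absorbing constants into the $O(1)$ term.

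For the first inequality we use the second term of \eqref{eq:prior-support-size-bounds}, namely $|\support(\nu)| \ge \frac{1}{1+\delta}\frac{d^{2t}}{t!}$. Stirling's approximation gives $\log|\support(\nu)| \ge t\log(ed^2/t) - \tfrac12\log(2\pi t) - 1$. In the regime $t \le d^2/c$ where this bound is meaningful, the leading term $t\log(d^2/t)$ dominates the subtractive terms up to a constant factor. Taking one more logarithm yields $m \ge \log t + \loglog(d^2/t) - O(1)$. The only care needed here is to check that the $-\tfrac12\log t$ and $\log(1+\delta)$ corrections cost only a constant factor inside the outer logarithm. This holds whenever $\log(d^2/t)$ is bounded below by a constant, which is exactly when $\loglog(d^2/t)$ is well defined, as the statement already stipulates.

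For the second inequality we use Lemma~\ref{lem:improved-support-size-bounds}. With $1-\delta = \Omega(1)$, both the prefactor $\frac{2-2\delta}{3+\delta}$ and $\ln\frac{1}{(1-\delta)/4}$ are constants. The lemma then gives $|\support(\nu)| \ge \Omega(1)\cdot (C' t/d^2)^{(d^2-1)/2}$ for a constant $C'>0$. Hence
\[
\log|\support(\nu)| \ge \frac{d^2-1}{2}\log\frac{C't}{d^2} - O(1).
\]
Taking $\log$ again gives $m \ge \log(d^2-1) - 1 + \loglog(t/d^2) - O(1) = 2\log d + \loglog(t/d^2) - O(1)$, valid when $t/d^2$ is larger than a suitable constant, so that $\log(C't/d^2)$ is at least a constant multiple of $\log(t/d^2)$.

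The main obstacle is purely bookkeeping. We must make sure the additive losses (the $O(1)$ from the prefactor and from Stirling) do not swamp the main term before the second logarithm. This requires the leading term to be at least a constant, which is guaranteed precisely in the regimes where the $\loglog$ expressions are well defined. No new ideas beyond the earlier cardinality bounds are needed.
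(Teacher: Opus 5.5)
Your proposal is correct and is essentially the paper's argument: the counting bound $|\mathscr{U}| \le 2^{2^m}$, combined with the bound $\frac{1}{1+\delta}\frac{d^{2t}}{t!}$ from \eqref{eq:prior-support-size-bounds} for the first inequality and with Lemma~\ref{lem:improved-support-size-bounds} for the second, with constants absorbed into the $O(1)$. Your second argument leaves out the window $1 < t/d^2 \le K$ (where $K$ is your ``suitable constant''); to cover it, note that a $(t,\delta)$-diamond-design is also a $(\lfloor d^2/2 \rfloor,\delta)$-diamond-design, so your first bound gives $m \ge 2\log d - O(1)$ there, and $\loglog(t/d^2) = O(1)$ throughout that window.
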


\begin{corollary}
	If $\mathscr{U}$ is an $(\epsilon,\eta)$-net and $1-\eta = \Omega(1)$, then
	\begin{align}
		m \ge 2\log d + \loglog (1/\epsilon) - O(1)~.
	\end{align}
\end{corollary}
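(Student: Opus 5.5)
The argument is a counting argument combined with Corollary~\ref{cor:net-size}. Since $\mathscr{U} = \{U_f\}_f$ is indexed by Boolean functions $f:\binset^m \to \binset$, we have $|\mathscr{U}| \le 2^{2^m}$. On the other hand, $\mathscr{U}$ is an $(\eps,\eta)$-net, so Corollary~\ref{cor:net-size} gives
\[
2^{2^m} \;\ge\; |\mathscr{U}| \;\ge\; (1-\eta)\Big(\frac{c_\diamond}{\eps}\Big)^{d^2-1}.
\]

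Taking a base-$2$ logarithm once gives
\[
2^m \ge (d^2-1)\log\frac{c_\diamond}{\eps} + \log(1-\eta).
\]
Since $1-\eta = \Omega(1)$, the term $\log(1-\eta)$ is an additive $-O(1)$. Taking logarithms again gives
\[
m \ge \log(d^2-1) + \log\log\frac{c_\diamond}{\eps} - O(1).
\]
Finally, $\log(d^2-1) = 2\log d - O(1)$, since we may assume $d\ge 2$.

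The only care needed is in the double logarithm. We want $\log\log(c_\diamond/\eps) = \loglog(1/\eps) - O(1)$, which holds once $\eps$ is sufficiently small compared with $c_\diamond$; the paper's convention that logarithmic expressions are read in the regime where they are well defined covers this. If $c_\diamond<1$, one handles it by writing $\log(c_\diamond/\eps) = \log(1/\eps) - \log(1/c_\diamond)$, which is at least $\tfrac12\log(1/\eps)$ for small $\eps$. The resulting factor of $\tfrac12$ becomes an additive constant after the second logarithm.

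We also need the additive $-O(1)$ from $\log(1-\eta)$ to be absorbed before the second logarithm. This is fine because $(d^2-1)\log(c_\diamond/\eps)$ dominates it in the same regime. Beyond this bookkeeping of constants there is no real obstacle. By Remark~\ref{rmk:manyf}, the same bound also applies when $\mathscr{U}$ is indexed by several multi-output functions.
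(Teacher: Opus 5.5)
Your proposal is correct and follows the paper's own argument: combine the trivial count $|\mathscr{U}| \le 2^{2^m}$ with the net-size lower bound of Corollary~\ref{cor:net-size}, then take logarithms twice, suppressing additive constants. Your extra bookkeeping makes explicit what the paper leaves to its convention that the logarithmic expressions are read in the regime where they are well defined. That bookkeeping consists of absorbing $\log(1-\eta)$, and handling the constant $c_\diamond$, which requires $\eps$ to be sufficiently small relative to $c_\diamond$.
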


\paragraph{Implications for Existing PRU Constructions and Candidates.} We can derive a conclusion regarding all existing PRU candidates in the oracle model (whether proven or not) \cite{JLS18,MPSY24,MH25,LQSYZ25}. In particular, all such constructions rely on random oracles with input length $\log d$ and output length at most $\polylog d$. They may also use up to $\polylog d$ different functions. It follows, therefore, that such a construction can potentially be a $(t,\delta)$-diamond design only for $t$ at most $d \, \polylog d$. Nor can such constructions be an $\epsilon$-net, or even a relaxed net, for any nontrivial parameters; therefore, they cannot be used as unitary synthesis solutions.

\subsection{Diagonal Matrices Do Not Add Much Entropy}\label{sec:diagonallb}

We show that the use of diagonal matrices with infinite precision does not provide a viable way to get around the entropy barrier. In particular, diagonal matrices with infinite precision can be replaced by classical binary functions of similar dimension.

\begin{theorem}\label{thm:diagonalnogo}
	Let $\{ U^{D_1, \ldots, D_\ell} \}_{D_1, \ldots, D_\ell}$ be a family of unitaries with oracle access to $\ell$ diagonal unitaries of dimension $2^m$, and let $s$ be an upper bound on the number of oracle calls made by $U$ (note that $\ell \le s$ without loss of generality). Then for all $\epsilon$ there exists a classical-oracle implementation family $\{V^f\}_f$, with $f: \binset^{m'} \to \binset$ and $m' \le m + \log s + \loglog ((s/\epsilon)+O(1))$, such that for all $D_1, \ldots, D_\ell$ there exists $f$ satisfying
	\[
	\norm{U^{D_1, \ldots, D_\ell} - V^f}_{\diamond} \le \epsilon~.
	\]
\end{theorem}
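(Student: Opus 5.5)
Our plan is to discretize the phases and then simulate the discretized diagonal oracles with a single binary function. Write each diagonal oracle as $D_j = \sum_{x \in \binset^m} e^{2\pi \iu \theta_j(x)} \ketbra{x}{x}$ with $\theta_j(x) \in [0,1)$. Fix $K = 2^k$ and round each phase to the nearest multiple of $1/K$, so $\theta_j(x) \approx \ell_j(x)/K$. This defines diagonal unitaries $\tilde D_j$ satisfying $\norm{D_j - \tilde D_j} \le \max_x \abs{e^{2\pi\iu\theta_j(x)} - e^{2\pi \iu \ell_j(x)/K}} \le \pi/K$ in operator norm. The algorithm may call the oracles in controlled form or on part of a larger register, so we note that $\norm{D_j \otimes \id - \tilde D_j \otimes \id}$ and the controlled versions obey the same bound. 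A standard hybrid argument then replaces the $s$ oracle calls one at a time. Each swap changes the overall isometry by at most $\pi/K$ in operator norm, so the whole circuit moves by at most $s\pi/K$. The diamond distance between the two channels is at most twice the operator-norm distance of the Stinespring isometries. Choosing $K \ge 2\pi s/\epsilon$, i.e.\ $k = \lceil \log(2\pi s/\epsilon)\rceil = \log(s/\epsilon) + O(1)$, therefore makes the error at most $\epsilon$.

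Next we replace the $\tilde D_j$ by classical functions. Each $\tilde D_j$ is the phase oracle of $g_j : \binset^m \to \binset^k$, $x \mapsto \ell_j(x)$. It can be implemented exactly with one query to the standard bit oracle $\ket{x}\ket{y}\mapsto\ket{x}\ket{y\oplus g_j(x)}$ as follows. Prepare an ancilla in the Fourier state $\frac{1}{\sqrt K}\sum_y \omega_K^{-y}\ket{y}$, which is an eigenstate of addition modulo $K$; then an addition oracle $\ket{x}\ket{y}\mapsto\ket{x}\ket{y+g_j(x) \bmod K}$ applies the phase $\omega_K^{g_j(x)}$. The modular-addition oracle is obtained from the XOR oracle by computing $g_j(x)$ into a fresh register, adding it in, and uncomputing. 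Controlled calls work the same way, by controlling the addition step.

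Finally we merge the $\ell \le s$ functions with $k$ output bits each into one binary function, using Remark~\ref{rmk:manyf}. This gives $f:\binset^{m + \lceil \log \ell\rceil + \lceil \log k\rceil}\to\binset$. Each query to $g_j$ costs $k$ bit queries to $f$ (times two with uncomputation), which is allowed since the theorem bounds only the input length $m'$ and not the query count of $V$. This yields $m' \le m + \log s + \log k + O(1) = m + \log s + \loglog((s/\epsilon)+O(1))$, up to the additive rounding constants, which we absorb as the statement does. Define $V^f$ as the resulting circuit, and for given $D_1,\ldots,D_\ell$ choose $f$ to encode the rounded phases.

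The main obstacle is the first step. We must make sure the hybrid bound survives the way the diagonal unitaries are accessed: controlled, tensored with ancillas, or possibly through inverse queries, which should be handled by $\tilde D_j^\dagger$ having the same error. We also need to pass correctly from operator norm to diamond norm; the rest is bookkeeping.
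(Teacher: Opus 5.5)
Your proposal is correct and follows essentially the same route as the paper: round each phase to $k=\log(s/\epsilon)+O(1)$ bits, bound the total error by $O(s2^{-k})$ with a hybrid (union) bound over the $s$ oracle calls, then merge the resulting $k$-bit-output functions into one binary function via Remark~\ref{rmk:manyf}, giving $m'\le m+\log s+\log k+O(1)$. The differences are minor: you work in operator norm and pay a factor of $2$ to pass to the diamond norm, where the paper cites a direct diamond-norm bound. You also spell out the phase-kickback implementation of the truncated diagonal oracle from a bit oracle, which the paper leaves implicit.
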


We now derive corollaries similar to the above for the diagonal setting. We consider a family of unitaries $\cD = \{ U^{D_1, \ldots, D_\ell} \}_{D_1, \ldots, D_\ell} \subseteq \unitary(d)$ where each $D_i \in \unitary(2^m)$ is a diagonal matrix (note that this family has uncountable cardinality), and let $s = \poly(\log d, \secp)$ for some security parameter $\secp$. The theorem above asserts that this family can be $d^{-\log d}$-approximated by a family of cardinality $2^{2^{m'}}$ for $m' = m + O(\loglog d+\log \secp)$. Hence the following holds.

\begin{corollary}
	If $\cD$ contains the support of some $(t,\delta)$-diamond-design with $1-\delta=\Omega(1)$ then
	\begin{align}
		m \ge \log(t) +  \loglog (d^2 / t) - O(\loglog d+\log \secp)~,
	\end{align}
	and also
	\begin{align}
		m \ge 2\log(d) + \loglog (t/ d^2) - O(\loglog d+\log \secp)~.
	\end{align}
\end{corollary}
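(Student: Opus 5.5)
The plan is to combine Theorem~\ref{thm:diagonalnogo} with the two support-size lower bounds already established, and then account for the cardinality of the resulting classical-oracle family. Suppose $\cD$ contains the support of a $(t,\delta)$-diamond-design $\nu$ with $1-\delta = \Omega(1)$. Apply Theorem~\ref{thm:diagonalnogo} with $s=\poly(\log d,\secp)$ and approximation parameter $\epsilon' = d^{-\log d}$, or any inverse-polynomially small constant fraction of $1-\delta$. This gives a family $\{V^f\}_f$ with $f:\binset^{m'}\to\binset$ and $m' \le m + \log s + \loglog(s/\epsilon'+O(1)) = m + O(\loglog d + \log\secp)$. Every $U$ in the support of $\nu$ is then $\epsilon'$-close in diamond norm to some $V^{f_U}$.

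Next, I would transport the design onto the classical family. Define $\nu'$ by sampling $U\sim\nu$ and outputting $V^{f_U}$. For any $t$-query non-adaptive distinguisher, the hybrid argument (triangle inequality over $t$ uses of the channel) changes its acceptance probability by at most $t\epsilon'$. Hence $\nu'$ is a $(t,\delta+t\epsilon')$-diamond-design. The obstacle is that $t$ may be large, as in the second bound where $t\gg d^2$. I would handle this in one of two ways.

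\begin{enumerate}
\item Choose $\epsilon' = (1-\delta)/(4t)$, which only contributes $\loglog(st)$ to $m'$. This stays within $O(\loglog d+\log\secp)$ whenever $\log t$ is polynomial in $\log d$ and $\secp$. For larger $t$ the bounds are anyway capped by the regime of interest.
\item Avoid the hybrid entirely in the second bound. Pass through nets: by Lemma~\ref{lem:designs-to-nets}, $\support(\nu)$ is an $(\eps,\eta)$-net with $\eps\approx\sqrt{d^2/t}$ and $1-\eta=\Omega(1)$. Each point is $\epsilon'$-close to the classical family, so that family is an $(\eps+\epsilon',\eta)$-net. The net-size bound Corollary~\ref{cor:net-size} then gives $2^{2^{m'}} \ge \Omega(1)\,(c_\diamond/(2\eps))^{d^2-1}$ once $\epsilon'\le\eps$. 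Taking logs twice yields $m' \ge 2\log d + \loglog(t/d^2) - O(1)$.
\end{enumerate}

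For the first bound, I would apply \eqref{eq:prior-support-size-bounds} to $\nu'$, whose support has size at most $2^{2^{m'}}$. Since its error is $\delta+t\epsilon'$ with $1-\delta-t\epsilon'=\Omega(1)$, we get $2^{m'} \ge \log\big(\Omega(1)\, d^{2t}/t!\big) \approx t\log(ed^2/t)$. Taking logs gives $m' \ge \log t + \loglog(d^2/t) - O(1)$; this bound is only meaningful for $t\le d^2$, so $\epsilon'\approx 1/t$ is cheap there. Finally, substitute $m \ge m' - O(\loglog d+\log\secp)$ in both bounds.

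The main step to get right is choosing $\epsilon'$ so that the approximation error does not swamp the design error, or, in the large-$t$ regime, routing through nets, while keeping the $\loglog(s/\epsilon')$ overhead within $O(\loglog d+\log\secp)$.
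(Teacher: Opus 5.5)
Your proposal is correct and is essentially the paper's argument: the paper simply notes that Theorem~\ref{thm:diagonalnogo} (with precision $d^{-\log d}$) approximates $\cD$ by a classical-oracle family of cardinality $2^{2^{m'}}$ with $m' = m + O(\loglog d + \log\secp)$, and then plugs this into the classical-oracle corollaries. Your hybrid bound $\delta + t\epsilon'$, the choice $\epsilon' \approx (1-\delta)/t$, and the alternative detour through Lemma~\ref{lem:designs-to-nets} and Corollary~\ref{cor:net-size} make explicit the transfer step that the paper leaves implicit; this is valid in the regime $\log t \le \poly(\log d, \secp)$, which is where the stated $O(\loglog d+\log\secp)$ slack is meaningful anyway.
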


\begin{corollary}
	If $\cD$ is an $(\epsilon,\eta)$-net and $1-\eta = \Omega(1)$, then
	\begin{align}
		m \ge 2\log d + \loglog (1/\epsilon) - O(\loglog d+\log \secp)~.
	\end{align}
\end{corollary}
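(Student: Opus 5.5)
The plan is to combine Theorem~\ref{thm:diagonalnogo} with the cardinality lower bound for relaxed nets (Corollary~\ref{cor:net-size}). Approximating the uncountable diagonal family by a finite Boolean-oracle family costs only an additive error and an additive term in the input length. The volume bound then forces that finite family to be large, which in turn forces $m$ to be large.

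\textbf{Step 1: replace the diagonal family by a finite one.} Suppose $\cD$ is an $(\epsilon,\eta)$-net with $1-\eta=\Omega(1)$. Set $\epsilon' = d^{-\log d}$. Applying Theorem~\ref{thm:diagonalnogo} with $s=\poly(\log d,\secp)$ and accuracy $\epsilon'$ gives a family $\{V^f\}_f$ with $f:\binset^{m'}\to\binset$. The input length satisfies
\[
m' \le m + \log s + \loglog(s/\epsilon' + O(1)) = m + O(\loglog d + \log\secp).
\]
Every element of $\cD$ lies within diamond distance $\epsilon'$ of some $V^f$.

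\textbf{Step 2: check that the finite family is still a relaxed net.} I will use the triangle inequality to show $\{V^f\}_f$ is an $(\epsilon+\epsilon',\eta)$-net. If $U\in\cvg_\epsilon(\cD)$, then $U$ is within $\epsilon$ of some element of $\cD$, which is within $\epsilon'$ of some $V^f$. Hence $\cvg_\epsilon(\cD)\subseteq\cvg_{\epsilon+\epsilon'}(\{V^f\})$, and the exposure does not increase.

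\textbf{Step 3: count.} The family $\{V^f\}$ has at most $2^{2^{m'}}$ elements. Corollary~\ref{cor:net-size} then gives
\[
2^{m'} \ge \log(1-\eta) + (d^2-1)\log\frac{c_\diamond}{\epsilon+\epsilon'}.
\]
Taking logarithms, and using $1-\eta=\Omega(1)$ to absorb $\log(1-\eta)$, yields
\[
m' \ge 2\log d + \loglog\frac{1}{\epsilon+\epsilon'} - O(1).
\]
Substituting the bound on $m'$ from Step 1 gives the claimed bound on $m$.

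\textbf{Main obstacle: the regime of $\epsilon$.} The bound is only useful if $\loglog\frac{1}{\epsilon+\epsilon'}$ is close to $\loglog(1/\epsilon)$. I handle this by cases.
\begin{itemize}
\item If $\epsilon\ge\epsilon'$, then $\epsilon+\epsilon'\le 2\epsilon$, and the loss is an $O(1)$ additive term inside the outer logarithm.
\item If $\epsilon<\epsilon'$, the argument is rerun with accuracy $\epsilon'=\epsilon$ in Theorem~\ref{thm:diagonalnogo}. The extra term is then $\loglog(s/\epsilon)$, while we gain $\loglog(1/\epsilon)$ on the other side, so these terms cancel up to $O(\loglog d+\log\secp)$. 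For such tiny $\epsilon$ the statement degenerates, and I expect the intended interpretation, per the paper's remark on well-defined regimes, to be $\epsilon\ge d^{-\log d}$.
\end{itemize}
Finally, I will check that $(d^2-1)\log(c_\diamond/\epsilon)$ is positive, which requires $\epsilon< c_\diamond$. Otherwise the right-hand side is trivial and there is nothing to prove.
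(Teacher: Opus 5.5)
Your proposal is correct and follows the paper's argument. Approximate the diagonal-oracle family to accuracy $d^{-\log d}$ using Theorem~\ref{thm:diagonalnogo}, which yields a Boolean-oracle family of cardinality at most $2^{2^{m'}}$ with $m'=m+O(\loglog d+\log\secp)$, and then apply the counting bound of Corollary~\ref{cor:net-size}.

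You also make explicit two points the paper leaves implicit: the triangle-inequality step showing that the approximating family is an $(\epsilon+d^{-\log d},\eta)$-net, and the fact that the $\loglog(1/\epsilon)$ term is only justified for $\epsilon\gtrsim d^{-\log d}$. One small slip: if $c_\diamond<1$, then for $c_\diamond\le\epsilon<1$ the right-hand side is not trivial as you claim. That regime is, however, equally out of reach of the paper's volume bound.
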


We now prove the theorem.

\begin{proof}[Theorem~\ref{thm:diagonalnogo}]
For a real value $x \in (-1, 1]$ we denote $\rnd{x}_k = 2^{-k} \rnd{2^k x}$, where $\rnd{\cdot}$ is rounding to the nearest integer, breaking ties upwards. We note that $\abs{x - \rnd{x}_k} \le 2^{-(k+1)}$. If $f: X \to (-1, 1]$ is some function, then we define $\rnd{f}_k$ to be the function where $\rnd{f}_k(x) = \rnd{f(x)}_k$. For any function $f: \binset^m \to (-1, 1]$, we let $D_f$ denote the diagonal unitary $D_f = \sum_x e^{i \pi f(x)} \ketbra{x}{x}$. We say that $D_{\rnd{f}_k}$ is the ``$k$-truncation'' of $D_f$. Note that a $k$-truncated function can be expressed as a $\binset^m \to \binset^{k+1}$ function.

It follows from \cite[Definition 1.5, Proposition 1.6]{haah2023query} that
\[
\dnorm{{D_f} - {D_{\rnd{f}_k}}} \le 2^{-k} \pi~,
\] since it holds that $\max_x  \abs{\theta_x} \le \pi 2^{-(k+1)}$.

Let us now consider the family $U^{D_1, \ldots, D_\ell}$. Let $f_i(x)$ be such that $D_i = D_{f_i}$ and let $g_i = \rnd{f_i}_k$, $D'_i = D_{g_i}$. Then by the union bound over the (at most) $s$ oracle calls of $U$:
\[
\norm{U^{D_1, \ldots, D_\ell} - U^{D'_1, \ldots, D'_\ell}}_{\diamond} \le s 2^{-k} \pi ~.
\]
Therefore, by choosing $k = \log(s/\epsilon)+O(1)$ with a sufficiently large additive constant, this distance is at most $\epsilon$.

We point out that each $g_i$ is determined by a function in $\binset^m \to \binset^{k+1}$. By Remark~\ref{rmk:manyf} it therefore holds that $U^{D'_1, \ldots, D'_\ell}$ can be implemented using $V^{f}$, where $f: \binset^{m+\log(s (k+1))} \to \binset$. Plugging in the expression for $k$ from above, the claim follows.
\end{proof}

\section{Optimal distinguisher for PFC}
\label{sec:PFC}

The so-called ``PFC'' ensemble for unitary designs and PRUs was first introduced by~\cite{MPSY24}, which showed that if $P$ is a random $2^n \times 2^n$ permutation matrix, $F$ is a random diagonal unitary matrix, and $C$ is a uniformly random $n$-qubit Clifford operator, then the $t$-th moments of $PFC$ are $O(t/\sqrt{d})$-close in trace distance to those of a Haar-random unitary, where $d = 2^n$. The proof of~\cite{MPSY24} thus only shows that the PFC ensemble is indistinguishable from Haar random when $t \ll \sqrt{d}$. The arguments from Section~\ref{sec:entropylb} imply that PFC cannot be a secure ROM-PRU for $t \gtrsim d$. This left open whether the $t$-th moments of PFC are still indistinguishable from Haar for the range $\sqrt{d} \leq t \leq d$.

In this section we exhibit an optimal distinguisher for the PFC ensemble for $t = \Theta(\sqrt{d})$. That is, $\Theta(\sqrt{d})$ nonadaptive queries to the PFC ensemble are enough to distinguish PFC from Haar random with constant advantage. The distinguisher is as follows. Set parameters
\[
    t = \lceil \sqrt{d} \rceil \, , \qquad \qquad \alpha = \frac{1}{4} \, ,\qquad \qquad k = 100000~.
\]

\begin{enumerate}
    \item Non-adaptively query the unitary oracle $U$ on the all zeroes state, $kt$ times, to obtain $\ket{\psi}^{\otimes kt}$.
    \item Divide the $kt$ copies into $k$ \emph{blocks} of $t$ copies each. For the $r$'th block: Measure all of the $t$ copies of $\ket{\psi}$ in the standard basis to obtain classical strings $x_1^{(r)},\ldots,x_t^{(r)} \in \{0,1\}^n$. Let
    \[
        M^{(r)} = \sum_{i < j} \mathds{1} \{ x_i^{(r)} = x_j^{(r)} \}
    \]
    denote the number of collisions in the $r$'th block.
    \item Let $M = \frac{1}{k} \sum_{r=1}^k M^{(r)}$ denote the empirical average of the number of collisions in each block. Fix a threshold $\alpha$, to be set later. If
    \[
        \Big | M - \binom{t}{2} \frac{2}{d+1} \Big | \leq \alpha
    \]
    then output ``Haar''. Otherwise, output ``PFC''.
\end{enumerate}

The following technical lemma argues that the estimator $M$ concentrates around the average two-way collision probability (meaning measuring two copies of the state $\ket{\psi}$ yields the same string), and the concentration also depends on the three-way collision probability.
\begin{lemma}
\label{lem:pfc-technical}
Let $\ket{\psi}$ be a state with the two- and three-way collision probabilities
    \[
        p_\psi = \sum_{x \in \{0,1\}^n} | \braket{x | \psi } |^4 \qquad \text{and} \qquad q_\psi = \sum_{x \in \{0,1\}^n} | \braket{x | \psi } |^6~.
    \]
    Then defining
    \[
        \mu = \binom{t}{2} p_\psi \qquad \text{and} \qquad \tau = t^2 p_\psi + 2t^3 q_\psi~,
    \]
    we have that for all $\beta > 0$
    \[
        \Pr \Big( \Big | M - \mu \Big| \geq \beta \Big) \leq \frac{\tau}{k \beta^2}
    \]
    where the randomness is over the samples $\{x^{(r)}_j\}_{1 \leq r \leq k, 1\leq j \leq t}$.
\end{lemma}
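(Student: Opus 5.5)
The plan is a second-moment computation for a single block, followed by averaging over the $k$ independent blocks and Chebyshev's inequality. Fix a block $r$ and drop the superscript. The strings $x_1,\ldots,x_t$ are i.i.d.\ samples from the distribution $P(x) = |\braket{x|\psi}|^2$. Write $M^{(r)} = \sum_{i<j} Z_{ij}$ with $Z_{ij} = \mathds{1}\{x_i = x_j\}$. Then $\E[Z_{ij}] = \sum_x P(x)^2 = p_\psi$, so $\E[M^{(r)}] = \binom{t}{2} p_\psi = \mu$. Since the blocks are independent and identically distributed, $\E[M] = \mu$ and $\mathrm{Var}(M) = \mathrm{Var}(M^{(r)})/k$. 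Chebyshev's inequality then gives $\Pr(|M-\mu|\ge\beta) \le \mathrm{Var}(M^{(r)})/(k\beta^2)$. It therefore suffices to show $\mathrm{Var}(M^{(r)}) \le \tau = t^2 p_\psi + 2t^3 q_\psi$.

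For the variance, expand
\[
\mathrm{Var}(M^{(r)}) = \sum_{\{i,j\}}\sum_{\{k',l\}} \mathrm{Cov}(Z_{ij}, Z_{k'l})
\]
and split the pairs of pairs according to how many indices they share.
\begin{itemize}
\item Disjoint pairs are independent, so their covariance is $0$.
\item Identical pairs contribute $\mathrm{Var}(Z_{ij}) \le \E[Z_{ij}] = p_\psi$. There are $\binom{t}{2} \le t^2/2$ such terms, for a total of at most $t^2 p_\psi/2$.
\item Pairs sharing exactly one index, such as $\{i,j\}$ and $\{i,l\}$, satisfy $\mathrm{Cov} \le \E[Z_{ij}Z_{il}] = \Pr[x_i=x_j=x_l] = \sum_x P(x)^3 = q_\psi$. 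The number of ordered such pairs of pairs is $t(t-1)(t-2) \le t^3$: choose the shared index, then two distinct other indices, in order. Their total is at most $t^3 q_\psi$.
\end{itemize}
Summing the three cases gives $\mathrm{Var}(M^{(r)}) \le t^2 p_\psi/2 + t^3 q_\psi \le \tau$. The slack factor of $2$ in both terms of $\tau$ comfortably absorbs any looseness in the counting.

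The only step needing care is the combinatorial count in the one-shared-index case. It is also convenient to drop the subtracted $p_\psi^2$ terms, which is valid because we only need upper bounds on the covariances. Beyond that, the argument is routine, and the factors in $\tau$ leave plenty of room.
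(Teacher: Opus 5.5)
Your proposal is correct and follows essentially the same route as the paper: a per-block second-moment computation split by how many indices the two pairs share, followed by averaging over the $k$ independent blocks and applying Chebyshev's inequality. The only difference is cosmetic: you work with covariances, so the disjoint-pairs case vanishes immediately, whereas the paper computes $\E (M^{(r)})^2$ and disposes of the four-distinct-index term against $\mu^2$ using $\binom{t-2}{2} \le \binom{t}{2}$.
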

\begin{proof}
Consider a fixed block $r$. For $1 \leq i < j \leq t$, let $E_{ij}^{(r)}$ denote the indicator variable for whether $x_i^{(r)} = x_j^{(r)}$. Then $M^{(r)} = \sum_{i < j} E_{ij}^{(r)}$ by definition.

In what follows, all probabilities are over the randomness of sampling $\{x_i^{(r)} \}_i$; the state $\ket{\psi}$ is fixed. The expectation $\mu$ of $M^{(r)}$ is
    \[
        \mu = \E M^{(r)} = \sum_{i < j} \E E_{ij}^{(r)}  = \binom{t}{2} p_\psi
    \]

    Now we calculate the variance of $M^{(r)}$ with respect to the randomness of sampling $x_1,\ldots,x_t$.
    \begin{align*}
        \E (M^{(r)})^2 &= \E  \sum_{\substack{i < j \\ i' < j'}} E_{ij}^{(r)} E_{i' j'}^{(r)}~.
    \end{align*}
    We can divide the above sum into three cases: $\{ i, i', j, j' \}$ has two, three, or four distinct indices. Suppose that it is two; then the sum becomes
    \[
        \E \sum_{i < j} E_{ij}^{(r)} = \binom{t}{2} p_\psi~.
    \]
    When there are three distinct indices, the expected value corresponds to the three-way collision probability:
    \[
        \E \sum_{\substack{i < j, i' < j' \\ |\{ i,i',j,j'\}| = 3}} E_{ij}^{(r)} E_{i' j'}^{(r)} =  \sum_{\substack{i < j, i' < j' \\ |\{ i,i',j,j'\}| = 3}} q_\psi \leq  2\binom{t}{2} (t-2) = 6\binom{t}{3} \, q_\psi~.
    \]
    When there are four distinct indices, we have
    \[
        \E \sum_{\substack{i < j, i' < j' \\ |\{ i,i',j,j'\}| = 4}} E_{ij}^{(r)} E_{i'j'}^{(r)} = \sum_{\substack{i < j, i' < j' \\ |\{ i,i',j,j'\}| = 4}} p_\psi^2 = \binom{t}{2}\binom{t-2}{2} p_\psi^2~.
    \]
    Thus the variance of $M^{(r)}$ is
    \begin{align*}
        \sigma^2 &= \binom{t}{2} p_\psi + 6\binom{t}{3} \, q_\psi + \binom{t}{2}\binom{t-2}{2} p_\psi^2 - \binom{t}{2}^2 p_\psi^2 \\
        & \qquad \qquad \leq t^2 p_\psi + 2t^3 q_\psi := \tau
    \end{align*}
    where in the last line we used that $\binom{t-2}{2} \leq \binom{t}{2}$.

    The variance of the average number of collisions in each block is thus
    \[
        \mathrm{Var} \Big(\frac{1}{k} \sum_{r=1}^k M^{(r)} \Big) \leq \frac{\sigma^2}{k} \leq \frac{\tau}{k}~.
    \]
    Thus, as we average the number of collisions across the $k$ blocks, we obtain an estimate that is concentrated around the mean $\mu$. By Chebyshev's inequality the probability that the average number of collisions $M = \frac{1}{k}\sum_r M^{(r)}$ deviates from $\mu$ by more than $\beta$ is bounded by
    \[
        \Pr \Big( \Big | M - \mu \Big| \geq \beta \Big) \leq \frac{\tau}{k \beta^2}~.
    \]
\end{proof}

Now we argue that, for an appropriate choice of $\alpha$, the estimator $M$ for a Haar-random state $\ket{\psi}$ is close to $\binom{t}{2} \frac{2}{d+1}$ with high probability.
\begin{lemma}
\label{lem:pfc-haar-case}
For sufficiently large $d$, if $U$ is sampled from the Haar measure, then the distinguisher outputs ``Haar'' with probability at least $0.99$.
\end{lemma}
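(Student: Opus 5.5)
The plan is a two-layer concentration argument. Layer one controls the random state $\ket{\psi} = U\ket{0}$: its collision quantities $p_\psi$ and $q_\psi$ should be close to their Haar averages. Layer two applies Lemma~\ref{lem:pfc-technical} conditionally on $\ket{\psi}$ to control the sampling randomness. For a Haar-random state, the standard moment formulas give
\[
\E p_\psi = d \cdot \frac{2}{d(d+1)} = \frac{2}{d+1}, \qquad \E q_\psi = d\cdot\frac{6}{d(d+1)(d+2)} = \frac{6}{(d+1)(d+2)}.
\]
The threshold in the distinguisher is $\binom{t}{2}\frac{2}{d+1} = \binom{t}{2}\E p_\psi$. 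I will therefore bound $|M - \mu|$ and $|\mu - \binom{t}{2}\E p_\psi|$ separately, each by $\alpha/2 = 1/8$, with failure probability at most $0.005$ for each, and conclude by the triangle inequality and a union bound.

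\textbf{Step 1: the state-level deviation.} With $t = \lceil\sqrt d\rceil$ we have $\binom{t}{2} \le d$ for large $d$. It therefore suffices to show $|p_\psi - \frac{2}{d+1}| \le \frac{1}{8d}$ with probability at least $0.995$. I will use Chebyshev with $\mathrm{Var}(p_\psi)$, computed from the fourth moment of the Haar state, $\E |\psi_x|^4|\psi_y|^4$ and $\E|\psi_x|^8$:
\[
\E p_\psi^2 = d\cdot\frac{24}{d(d+1)(d+2)(d+3)} + d(d-1)\cdot\frac{4}{d(d+1)(d+2)(d+3)},
\]
which gives $\mathrm{Var}(p_\psi) = O(1/d^3)$. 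Chebyshev then yields $\Pr[|p_\psi - \E p_\psi| \ge \frac{1}{8d}] \le O(d^2/d^3) = O(1/d)$, which is below $0.005$ for large $d$. Alternatively, Lévy's lemma on the sphere (with $p_\psi$ being $O(1/\sqrt d)$-Lipschitz in a suitable region) would also work, but Chebyshev is the simplest route. Similarly, Markov's inequality gives $q_\psi \le 1000\,\E q_\psi = O(1/d^2)$ with probability at least $0.999$. On these good events $p_\psi \le 3/d$, so
\[
\tau = t^2p_\psi + 2t^3q_\psi = O(1) + O(d^{3/2}/d^2) = O(1),
\]
with an explicit constant of about $3(1+o(1))$ plus a vanishing term.

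\textbf{Step 2: the sampling deviation.} Conditioned on the good events, Lemma~\ref{lem:pfc-technical} with $\beta = 1/8$ gives
\[
\Pr[|M - \mu| \ge 1/8] \le \frac{64\tau}{k} \le \frac{64 \cdot 4}{100000} < 0.003.
\]
A union bound over all failure events then gives total failure probability below $0.01$. In that case $|M - \binom{t}{2}\frac{2}{d+1}| \le 1/4 = \alpha$, so the distinguisher outputs ``Haar''.

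The main obstacle is Step 1: getting the Haar moment computations of $\mathrm{Var}(p_\psi)$ right, including the $(d+1)(d+2)(d+3)$ normalizations. The crucial point is that the variance is of order $1/d^3$ rather than $1/d^2$, so that after multiplying by $\binom{t}{2}^2 \approx d^2/4$ the fluctuation of $\mu$ is $o(1)$. I expect the leading terms to cancel exactly: $\frac{4d(d-1)+24d}{d(d+1)(d+2)(d+3)} - \frac{4}{(d+1)^2} = O(d^{-3})$, and I will verify this explicitly.
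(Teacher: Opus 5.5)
Your proposal is correct and follows essentially the same route as the paper: compute the Haar moments $\E p_\psi = \frac{2}{d+1}$ and $\mathrm{Var}(p_\psi) = \frac{4(d-1)}{(d+3)(d+2)(d+1)^2} = \Theta(d^{-3})$, apply Chebyshev to $p_\psi$ and Markov to $q_\psi$ to define ``good'' states, then invoke Lemma~\ref{lem:pfc-technical} with $\beta = 1/8$, a constant bound on $\tau$, and a union bound. The only differences are cosmetic choices of constants: a tolerance of $\frac{1}{8d}$ instead of $\frac{1}{100d}$, and a Markov threshold of $1000\,\E q_\psi$ instead of $d^{-3/2}$.
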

\begin{proof}
    When $U$ is Haar-random, the state $\ket{\psi} = U\ket{0 \cdots 0}$ is Haar-random. In what follows, the probabilities will be taken over the randomness of sampling $\ket{\psi}$.

    \paragraph{Expectation.} Note that
    \begin{align*}
        \E_\psi p_\psi &= \E_\psi \sum_x \braket{x,x | \psi, \psi} \braket{\psi,\psi | x,x} = \sum_x \bra{x,x} \frac{\Pi_{\mathrm{sym}}^{(2)}}{\Tr(\Pi_{\mathrm{sym}}^{(2)})} \ket{x,x} \\
        &= \frac{d}{\binom{d+1}{2}} = \frac{2}{d+1}~.
    \end{align*}
    We used that averaging over two copies of a Haar-random state yields the maximally mixed state on the symmetric subspace (whose projector is $\Pi_{\mathrm{sym}}^{(2)}$).

    Similarly, we have
    \begin{align*}
        \E_\psi q_\psi &= \sum_x \bra{x,x,x} \frac{\Pi_{\mathrm{sym}}^{(3)}}{\Tr(\Pi_{\mathrm{sym}}^{(3)})} \ket{x,x,x} \\
        &= \frac{d}{\binom{d+2}{3}} = \frac{6}{(d+2)(d+1)}~.
    \end{align*}

    \paragraph{Variance.} Note that
    \begin{align*}
        \E_\psi p_\psi^2 &= \E_\psi \sum_{x,y} | \braket{x | \psi } |^4 \cdot | \braket{y | \psi } |^4 = \E_\psi \sum_{x,y} \braket{x,x,y,y | \psi,\psi,\psi,\psi}\braket{\psi,\psi,\psi,\psi | x,x,y,y} \\
        &= \frac{1}{\binom{d+3}{4}} \sum_{x,y} \bra{x,x,y,y} \Pi_{\mathrm{sym}}^{(4)}\ket{x,x,y,y}
    \end{align*}
    where $\Pi_{\mathrm{sym}}^{(4)}$ denotes the projector onto the four-fold symmetric subspace. We divide the sum into two cases. When $x = y$, the sum becomes
    \[
        \frac{d}{\binom{d+3}{4}} = \frac{4!}{(d+3)(d+2)(d+1)}
    \]
    When $x \neq y$, we have
    \[
        \Pi_{\mathrm{sym}}^{(4)}\ket{x,x,y,y} = \frac{1}{\sqrt{6}} \Big( \ket{x,x,y,y} + \ket{y,y,x,x} + \ket{x,y,x,y} + \ket{y,x,y,x} + \ket{x,y,y,x} + \ket{y,x,x,y} \Big)~.
    \]
    Therefore $\bra{x,x,y,y} \Pi_{\mathrm{sym}}^{(4)}\ket{x,x,y,y} = \frac{1}{6}$. Putting everything together, we have
    \[
        \E_\psi p^2_\psi = \frac{4! + 4(d-1)}{(d+3)(d+2)(d+1)}
    \]
    Thus the variance is
    \begin{align*}
        \mathrm{Var}_\psi( p_\psi) = \E_\psi p^2_\psi - \Big( \E_\psi p_\psi \Big)^2 &= \frac{4! + 4(d-1)}{(d+3)(d+2)(d+1)} - \frac{4}{(d+1)^2} \\ &= \frac{4(d+5)(d+1) - 4(d+3)(d+2) }{(d+3)(d+2)(d+1)^2} \\
        &= \frac{4(d-1)}{(d+3)(d+2)(d+1)^2} = \Theta(d^{-3})~.
    \end{align*}

    Define $\ket{\psi}$ to be \emph{good} if
    \begin{enumerate}
        \item $\Big| p_\psi - \frac{2}{d+1} \Big| \leq \frac{1}{100d}$
        \item $q_\psi \leq d^{-3/2}$.
    \end{enumerate}
    A state $\ket{\psi}$ fails to be good only if either condition (1) or condition (2) fails to hold. By Chebyshev's inequality, we have
    \[
        \Pr_\psi \Big( \Big| p_\psi - \frac{2}{d+1} \Big| \geq \frac{1}{100d} \Big) \leq \mathrm{Var}_\psi( p_\psi)\cdot O(d^2) = O(d^{-1})~.
    \]
    Thus condition (1) fails to hold with probability at most $O(1/d)$. By Markov's inequality condition (2) fails to hold with probability at most $O(d^{-1/2})$. Therefore $\ket{\psi}$ is not good with probability at most $O(d^{-1/2})$.

    Fix a good $\ket{\psi}$. Now consider the distinguisher.
    The distinguisher is checking whether the empirical average $M$ deviates from $\binom{t}{2} \frac{2}{d+1}$ by more than $\alpha$; this implies the event that $M$ deviates from $\mu := \binom{t}{2} p_\psi$ by more than $\alpha - |\mu - \binom{t}{2} \frac{2}{d+1}| \geq \alpha - \binom{t}{2} \frac{1}{100d} \geq \frac{1}{8} =: \beta$ (this uses that $\ket{\psi}$ is good and that $d$ is sufficiently large). We can thus bound the probability as follows:
    \[
       \Pr \Big( \Big | M - \binom{t}{2} \frac{2}{d+1} \Big| \geq \alpha \, : \, \psi \text{ good}\Big) \leq \Pr \Big( \Big | M - \mu \Big| \geq \beta \, : \, \psi \text{ good} \Big) \leq \frac{\tau}{k \beta^2}
    \]
    where we used Lemma~\ref{lem:pfc-technical}.

    Then we have that
    \begin{align*}
        \Pr \Big( \text{distinguisher outputs ``Haar''} \Big) &\geq \Pr( \psi \text{ is good}) \Big(1 - \frac{\tau}{k \beta^2} \Big) \\
        &\geq (1 - O(d^{-1/2})) \Big(1 - \frac{\tau}{k \beta^2} \Big)~.
    \end{align*}

    Now we compute an upper bound on $\tau := t^2 p_\psi + 2t^3 q_\psi$. Since $\ket{\psi}$ is good, we have that $p_\psi \leq \frac{2}{d+1} + \frac{1}{100d} \leq \frac{3}{d}$ and $q_\psi \leq d^{-3/2}$. Thus
    \[
        \tau \leq \frac{3t^2}{d} + \frac{2t^3}{d^{3/2}} \leq 5
    \]
    for all sufficiently large $d$. Thus for sufficiently large $d$, we have
    \[
        \Pr \Big( \text{distinguisher outputs ``Haar''} \Big) \geq 1 - \frac{320}{k} - O(d^{-1/2}) \geq 0.99
    \]
    as claimed.
\end{proof}

On the other hand, we show that the PFC ensemble is detected with probability at least $\frac{1}{4}$ by the distinguisher.
\begin{lemma}
\label{lem:pfc-pfc-case}
    For sufficiently large $d$, if $U$ is sampled from the PFC ensemble, then the distinguisher outputs ``PFC'' with probability at least $1/4$.
\end{lemma}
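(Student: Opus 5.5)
Since $P$ is a permutation matrix and $F$ is diagonal, for $U = PFC$ the computational-basis distribution of $U\ket{0^n}$ is just that of $C\ket{0^n}$ with outcomes relabeled by the permutation. Collision counts $M^{(r)}$ are invariant under relabeling, so the distinguisher's output distribution on PFC equals its output distribution on a uniformly random Clifford $C$. For a stabilizer state $C\ket{0^n}$, the standard-basis outcome is uniform over an affine subspace of $\mathbb{F}_2^n$ of some dimension $j \le n$. Hence $p_\psi = 2^{-j}$ and $q_\psi = 4^{-j}$.

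The plan is to show that with probability at least roughly $0.28$ over $C$ we have $j = n$, i.e.\ $p_\psi = 1/d$. The event $j=n$ means the stabilizer group of $C\ket{0^n}$ contains no nontrivial element with only $Z$ and $I$ factors; equivalently, it is a graph-like state whose $X$-part has full rank. Counting stabilizer states: a state with support an $n$-dimensional affine space is $\sum_x (-1)^{q(x)} i^{\ell(x)}\ket{x}/\sqrt d$ with $q$ quadratic and $\ell$ linear, giving $2^{n(n+1)/2}\cdot 2^n$ such states. The total number of stabilizer states is $2^n\prod_{i=1}^n(2^i+1)$. The ratio is $\prod_{i=1}^n (1+2^{-i})^{-1} \ge \prod_{i\ge1}(1+2^{-i})^{-1} \approx 0.419$. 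Since a uniformly random Clifford applied to $\ket{0^n}$ gives a uniformly random stabilizer state, $\Pr[j=n]\ge 0.41$. I would verify the count carefully, since it is the one genuinely non-routine step. Alternatively, I can argue via the uniformly random $X$-part of the stabilizer tableau being full rank, which has probability $\prod_i(1-2^{-i})\approx 0.289$. That also suffices, because any constant above $1/4$ with slack works.

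Conditioned on $j=n$, I apply Lemma~\ref{lem:pfc-technical}. The mean is $\mu = \binom{t}{2}/d$, whereas the Haar target is $\binom{t}{2}\frac{2}{d+1}$. With $t=\lceil\sqrt d\rceil$ their difference is $\binom t2(\frac{2}{d+1}-\frac1d) \to 1/2$. The distinguisher therefore outputs ``Haar'' only if $|M-\mu| \ge 1/2 - \alpha - o(1) \ge 1/5 =: \beta$ for large $d$. Also $\tau = t^2/d + 2t^3/d^2 \le 2$ for large $d$, so the failure probability is at most $\tau/(k\beta^2) \le 50/k$, which is negligible for $k=10^5$. Hence $\Pr[\text{``PFC''}] \ge \Pr[j=n]\,(1-50/k) \ge 0.28\cdot 0.9995 > 1/4$.

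The main obstacle is the probability lower bound on $j = n$. The rest is a direct invocation of Lemma~\ref{lem:pfc-technical} plus the invariance observation.
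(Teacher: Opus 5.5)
Your proof is correct and follows the paper's argument. You reduce to the stabilizer state $C\ket{0^n}$ via relabeling invariance, condition on full support, and apply Lemma~\ref{lem:pfc-technical} with a constant $\beta$. You then lower-bound $\Pr[|A|=d]$ by counting the $2^{n(n+3)/2}$ full-support phase states against all $2^n\prod_{j}(2^j+1)$ stabilizer states; your constants (the limiting mean gap $1/2$ and the bound $\approx 0.419$) are, if anything, slightly tighter than the paper's $1/3$ and $e^{-1}$.

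Drop your fallback remark, though. The $X$-part of a random stabilizer tableau is not uniform over $\mathbb{F}_2^{n\times n}$, because the rows $(x_i \mid z_i)$ must pairwise commute and so span a Lagrangian subspace. Consequently the full-rank probability is exactly $\prod_j(1+2^{-j})^{-1}$, not $\prod_j(1-2^{-j})$.
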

\begin{proof}
    Let $U = PFC$ for a random permutation matrix $P$, random diagonal unitary $F$, and random Clifford $C$. Let $\ket{\varphi} = C\ket{0 \cdots 0}$. By definition $\ket{\varphi}$ is a stabilizer state; for every fixed choice of $C$ there exists an affine subspace $A \subseteq \mathbb{Z}_2^n$ such that measuring $\ket{\varphi}$ in the standard basis yields a uniformly random element of $A$:
    \[
        \ket{\varphi} = \frac{1}{\sqrt{|A|}} \sum_{x \in A} \alpha_x \ket{x}
    \]
    where $\alpha_x$ are complex numbers on the unit circle. The state $\ket{\psi} = PFC\ket{0 \cdots 0} = PF \ket{\varphi}$ has the property that measuring it in the standard basis yields the uniform distribution over a set of size $|A|$ (though the set need not be a subspace, because of the permutation $P$).

    For a fixed $\ket{\psi}$, the two- and three-way collision probabilities are straightforward to calculate:
    \[
        p_\psi = \frac{1}{|A|} \qquad \text{and} \qquad q_\psi = \frac{1}{|A|^2}~.
    \]
    Condition on the event that $|A| = d$; in other words, the stabilizer state has full support. Then $\mu := \binom{t}{2} \frac{1}{d} \leq \frac{t^2}{2d}$ and
    \[
        \tau = \frac{t^2}{d} + \frac{2t^3}{d^2} \leq 2~.
    \]
    By Lemma~\ref{lem:pfc-technical} we have that for all $\beta > 0$,
    \[
        \Pr \Big( \Big| M - \mu \Big| \geq \beta : |A| = d \Big) \leq \frac{\tau}{k \beta^2}~.
    \]
    Set $\beta := \frac{1}{8}$. Since $\binom{t}{2} \frac{2}{d+1} - \mu = \binom{t}{2} \Big( \frac{2}{d+1} - \frac{1}{d} \Big) \geq \frac{1}{3}$ for sufficiently large $d$, we have that
    \[
        \Pr \Big( \Big| M - \binom{t}{2} \frac{2}{d+1} \Big| \leq \alpha : |A| = d \Big) \leq \Pr \Big( \Big| M - \mu \Big| \geq \beta : |A| = d \Big) \leq \frac{\tau}{k \beta^2}~.
    \]
    This means that
    \begin{align*}
        &\Pr(\text{distinguisher outputs ``PFC''}) \\
        &\qquad \qquad \geq \Pr_\psi(|A| = d) \Pr \Big( \Big| M - \binom{t}{2} \frac{2}{d+1} \Big| \geq \alpha : |A| = d \Big) \\
        &\qquad \qquad \geq \Pr_\psi(|A| = d) \Big( 1 - \frac{\tau}{k\beta^2} \Big) \geq \Pr_\psi(|A| = d) \Big (1 - \frac{128}{k} \Big)~.
    \end{align*}

    Finally, we need to lower bound the probability that $|A| = d$. We do this by exhibiting a large family of stabilizer states with full support. Let $M \in \{0,1\}^{n \times n}$ be a matrix with zeros along the diagonal, and let $u,v \in \{0,1\}^n$. Define the state
    \[
        \ket{\Gamma_{M,u,v}} = 2^{-n/2} \sum_{x \in \{0,1\}^n} \iu^{u \cdot x} (-1)^{x^\top M x + v \cdot x} \ket{x}~.
    \]
    This is a stabilizer state because it is obtained from $\ket{+}^{\otimes n}$ by the following Clifford circuit:
    \begin{enumerate}
        \item Apply $S$ to qubit $i$ if $u_i = 1$.
        \item Apply $CZ_{ij}$ if $M_{ij} = 1$.
        \item Apply $Z$ to qubit $i$ if $v_i = 1$.
    \end{enumerate}
    These states are distinct for every choice of $M,u,v$.
    Counting, there are $2^{n(n-1)/2} \cdot 2^{2n}$ such states. On the other hand, it is well-known that the total number of $n$-qubit stabilizer states is
    \[
        2^n \prod_{j =1}^n (2^j + 1)~.
    \]
    Taking ratios we get that the number of full-support stabilizer states is at least
    \[
        \Pr_\psi (|A| = d) \geq \frac{2^{n(n-1)/2} \cdot 2^{2n}}{2^n \prod_{j =1}^n (2^j + 1)} = \prod_{j=1}^n \frac{1}{1+2^{-j}} \geq \exp \Big( -\sum_{j=1}^n 2^{-j} \Big) \geq e^{-1}~.
    \]

    Putting everything together, we have that
    \[
        \Pr(\text{distinguisher outputs ``PFC''}) \geq \frac{1}{e} \Big( 1 - \frac{128}{k} \Big) \geq \frac{1}{4}
    \]
    as claimed.
\end{proof}

\begin{lemma}
    There exists an adversary for the PFC ensemble that makes $\Theta(\sqrt{d})$ queries and distinguishes PFC from Haar with constant advantage.
\end{lemma}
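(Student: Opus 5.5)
The plan is to take as the adversary exactly the collision-counting distinguisher described above, with $t = \lceil \sqrt{d}\rceil$, $\alpha = 1/4$ and $k = 100000$. Its total number of queries is $kt = 100000\lceil\sqrt{d}\rceil = \Theta(\sqrt{d})$, because $k$ is an absolute constant. All queries are made non-adaptively, on the fixed input $\ket{0\cdots 0}$. The advantage then follows by combining Lemma~\ref{lem:pfc-haar-case} and Lemma~\ref{lem:pfc-pfc-case}, with no further analysis of the state distributions.

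Concretely, define the distinguisher $\cD$ to output $1$ when the test outputs ``PFC'' and $0$ when it outputs ``Haar''. By Lemma~\ref{lem:pfc-haar-case}, for sufficiently large $d$ we have $\Pr_{U\sim\hr}[\cD^U=1] \le 0.01$. By Lemma~\ref{lem:pfc-pfc-case}, for sufficiently large $d$ we have $\Pr_{U\sim \mathrm{PFC}}[\cD^U = 1] \ge 1/4$. Subtracting the two bounds gives a distinguishing advantage of at least $1/4 - 0.01 = 0.24$, which is a constant independent of $d$.

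The remaining issue is the phrase ``sufficiently large $d$''. Both lemmas hold only for $d \ge d_0$ for some absolute constant $d_0$, so the claim is meant asymptotically, which is all the $\Theta(\cdot)$ statement requires. If one wanted a statement for every $d$, the finitely many small dimensions could be absorbed into the constants, but that is not needed here. The only other point to check is that the two lemmas use the same parameters $t,\alpha,k$ and the same threshold test, which they do by construction. This step is therefore a direct combination and poses no real obstacle; all the substantive work was done in the variance bound of Lemma~\ref{lem:pfc-technical} and in the count of full-support stabilizer states.

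Finally, the matching lower bound in the word ``optimal'' comes from the known security of PFC: $o(\sqrt{d})$ queries give advantage $o(1)$ by the $O(t/\sqrt{d})$ bound of~\cite{MPSY24}. This justifies the $\Theta$ rather than merely an $O$.
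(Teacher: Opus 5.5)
Your proposal is correct and matches the paper's proof. The paper also combines Lemma~\ref{lem:pfc-haar-case} and Lemma~\ref{lem:pfc-pfc-case}, which gives advantage at least $1/4 - 0.01$ using $kt = \Theta(\sqrt{d})$ non-adaptive queries. Your final paragraph is unnecessary: the $\Theta(\sqrt{d})$ in the statement refers only to this adversary's query count, which you had already established, and not to a matching lower bound against all adversaries.
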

\begin{proof}
    This follows from Lemma~\ref{lem:pfc-haar-case} and Lemma~\ref{lem:pfc-pfc-case}.
\end{proof}

\section*{Acknowledgments}

Zvika Brakerski is supported by the Horizon Europe Research and Innovation Program via ERC Project ACQUA (Grant 101087742). Henry Yuen is supported by AFOSR
award FA9550-23-1-0363, NSF awards CCF-2530159, CCF-2144219, and CCF-2329939, and by the
Sloan Foundation.

\printbibliography

\end{document}